\newcommand{\fixlist}{\addtolength{\itemsep}{-5pt}}
\newcommand{\capacity}{{{{\mathrm{cap}}}}}
\newcommand{\cost}{{{{\mathrm{cost}}}}}
\newcommand{\pos}{{{{\mathrm{pos}}}}}
\newcommand{\bordainc}{{{{\mathrm{B, inc}}}}}
\newcommand{\bordadec}{{{{\mathrm{B, dec}}}}}
\newcommand{\OPT}{{{{\mathrm{OPT}}}}}
\newcommand{\opt}{{{{\mathrm{opt}}}}}
\newcommand{\sat}{{{{\mathrm{sat}}}}}
\newcommand{\E}{\mathop{\mathbb E}}
\newcommand{\w}{{{{\mathrm{w}}}}}
\title{Fully Proportional Representation as Resource Allocation: Approximability Results}
\author{Piotr Skowron \\ University of Warsaw \\ Warsaw, Poland
\and
        Piotr Faliszewski\\ AGH University \\ Krakow, Poland
\and 
        Arkadii Slinko\\University of Auckland\\ Auckland, New Zealand
}
\newtheorem{theorem}{Theorem}
\newtheorem{definition}{Definition}
\newtheorem{proposition}[theorem]{Proposition}
\newtheorem{lemma}[theorem]{Lemma}
\newcommand{\np}{{\mathrm{NP}}}
\newcommand{\fpt}{{\mathrm{FPT}}}
\newcommand{\wone}{{\mathrm{W[1]}}}
\newcommand{\wtwo}{{\mathrm{W[2]}}}
\newcommand{\p}{{\mathrm{P}}}
\newcommand{\naturals}{{{\mathbb{N}}}}
\newcommand{\calL}{{{\mathcal{L}}}}
\newcommand{\calA}{{{\mathcal{A}}}}
\newcommand{\calF}{{{\mathcal{F}}}}
\newcommand{\calS}{{{\mathcal{S}}}}
\begin{document}

\maketitle

\begin{abstract}
  We model Monroe's and Chamberlin and Courant's multiwinner voting
  systems as a certain resource allocation problem. We show that for
  many restricted variants of this problem, under standard
  complexity-theoretic assumptions, there are no constant-factor
  approximation algorithms.  Yet, we also show cases where good
  approximation algorithms exist (briefly put, these variants
  correspond to optimizing total voter satisfaction under Borda
  scores, within Monroe's and Chamberlin and Courant's voting
  systems).
\end{abstract}

\section{Introduction}\label{sec::introduction}
Resource allocation is one of the most important issues in multiagent
systems, equally important both to human societies and to artificial
software agents~\cite{ley-sho:b:multiagent-systems}. For example, if
there is a set of items (or a set of bundles of items) to distribute
among agents then we may use one of many auction mechanisms (see,
e.g.,~\cite{ley-sho:b:multiagent-systems,nis-rou-tar-vaz:b:agt} for an
introduction and a review, and numerous recent papers on auction
theory for current results). However, typically in auctions if an
agent obtains an item (a resource) then this agent has exclusive
access to it. In this paper we consider resource allocation for items
that can be shared, and we are interested in computing (approximately)
optimal assignments (in particular, for settings where resource
allocation boils down to multiwinner voting). As opposed to a large
body of research on auctions, resource allocation, and mechanism
design, we do not make any strategic considerations.

Let us explain our resource allocation problem through an
example. Consider a company that wants to provide free sport classes
to its employees. We have a set $N = \{1, \ldots, n\}$ of employees
and a set $A = \{a_1,\ldots, a_m\}$ of classes that are
offered. Naturally, not every class is equally appealing to each
employee and, thus, each employee orders the classes from the most
desirable one to the least desirable one. For example, the first
employee might have preference order $a_1 \succ a_3 \succ \cdots \succ
a_m$, meaning that for him or her $a_1$ is the most attractive class,
$a_3$ is second, and so on, until $a_m$, which is least
appealing. Further, each class $a_i$ has some maximum capacity
$\capacity_{a_i}$, that is, a maximum number of people that can
comfortably participate, and a cost, denoted $c_{a_i}$, of opening the
class (independent of the number of participants).  The company wants
to assign the employees to the classes so that it does not exceed its
sport-classes budget $B$ and so that the employees' satisfaction is
maximal (or, equivalently, their dissatisfaction is minimal).

There are many ways to measure (dis)satisfaction.  For example, we may
measure an employee's dissatisfaction as the position of the class to
which he or she was assigned in his or her preference order (and
satisfaction as $m$ less the voter's dissatisfaction). Then, we could
demand that, for example, the maximum dissatisfaction of an employee
is as low as possible (minimal satisfaction is as high as possible; in
economics this corresponds to egalitarian social welfare) or that the
sum of dissatisfactions is minimal (the sum of satisfactions is
maximal; this corresponds to the utilitarian approach in economics).

It turns out that our model generalizes two well-known multiwinner
voting rules; namely, those of Monroe~\cite{monroeElection} and of
Chamberlin and Courant~\cite{ccElection}.  Under both these rules
voters from the set $N$ submit preference orders regarding
alternatives from the set $A$, and the goal is to select $K$
candidates (the representatives) best representing the voters. For
simplicity, let us assume that $K$ divides $\|N\|$.\footnote{We stress
  that this assumption does not really affect our results. Our
  algorithms would maintain their quality without the assumption. On
  the other hand, without the assumption modeling Monroe's and
  Chamberlin and Courant's systems would be more tedious and some
  calculations would be a bit more involved.} Under Monroe's rule we
have to match each selected representative to $\frac{\|N\|}{K}$ voters
so that each voter has a unique representative and so that the sum of
voters' dissatisfactions is minimal (dissatisfaction is, again,
measured by the position of the representative in the voter's
preference order). Chamberlin and Courant's rule is similar except
that there are no restrictions on the number of voters a given
alternative represents (in this case it is better to think of the
alternatives as political parties rather than particular
politicians). It is easy to see that both methods are special cases of
our setting: For example, for Monroe it suffices to set the ``cost''
of each alternative to be $1$, to set the budget to be $K$, and to set
the ``capacity'' of each alternative to be $\frac{\|N\|}{K}$.  We can
consider variants of these two systems using different measures of
voter (dis)satisfaction, as indicated above (see also the works of
Potthoff and Brams \cite{potthoff-brams}, Betzler et
al.~\cite{fullyProportionalRepr} and of Lu and
Boutilier~\cite{budgetSocialChoice}).

Unfortunately, it is well-known that both Monroe's method and
Chamberlin and Courant's method are $\np$-hard to compute in
essentially all nontrivial
settings~\cite{complexityProportionalRepr,budgetSocialChoice,fullyProportionalRepr}.
This holds even if various natural parameters of the election are
low~\cite{fullyProportionalRepr}. Notable exceptions include, e.g.,
the case where $K$ is bounded by a fixed constant and the case where
voter preferences are single-peaked~\cite{fullyProportionalRepr}.

Nonetheless, Lu and Boutilier~\cite{budgetSocialChoice}---starting
from a very different motivation and context---propose to rectify the
high computational complexity of Chamberlin and Courant's system by
designing approximation algorithms. In particular, they show that if
one focuses on the sum of voters' satisfactions, then there is a
polynomial-time approximation algorithm with approximation ratio $(1 -
\frac{1}{e}) \approx 0.63$ (i.e., their algorithm outputs an
assignment that achieves no less than about $0.63$ of optimal voter
satisfaction).  Unfortunately, total satisfaction is a tricky
measure. For example, under standard Chamberlin and Courant's system,
a $\frac{1}{2}$-approximation algorithm is allowed to match each voter
to an alternative somewhere in the middle of this voter's preference
order, even if there is a feasible solution that matches each voter to
his or her most preferred candidate. On the other hand, it seems that
a 2-approximation focusing on total dissatisfaction would
give results of very high quality.

The goal of this paper is to provide an analysis of our resource
allocation scenario, focusing on approximation algorithms for the
special cases of Monroe's and Chamberlin and Courant's voting
systems. We obtain the following results:
\begin{enumerate}
\fixlist
\item Monroe's and Chamberlin and Courant's systems are hard to
  approximate up to any constant factor for the case where we measure
  dissatisfaction, irrespective of whether we measure the total
  dissatisfaction (Theorems~\ref{theorem:noApprox1}
  and~\ref{theorem:noApprox3}) or the dissatisfaction of the most
  dissatisfied voter (Theorems~\ref{theorem:noApprox2}
  and~\ref{theorem:noApprox4}).

\item Monroe's and Chamberlin and Courant's systems are hard to
  approximate within any constant factor for the case where we measure
  satisfaction of the least satisfied voter
  (Theorems~\ref{theorem:noApprox5}
  and~\ref{theorem:noApprox6}). However, there are good approximation
  algorithms for total satisfaction---for the Monroe's system we
  achieve approximation ratio arbitrarily close to $0.715$ (and often
  a much better one; see Section~\ref{sec:monroe-approx}). For
  Chamberlin and Courant's system we give a polynomial-time
  approximation scheme (that is, for each $\epsilon$, $0 < \epsilon <
  1$, we have a polynomial-time $(1-\epsilon)$-approximation
  algorithm; see Theorem~\ref{theorem:ptas}).
\end{enumerate}

Our work is similar to several lines of research on computational
social choice and multiagent systems. In particular, there is a
well-established line of work on the hardness and approximability of
winner determination for single-winner voting rules, with results for,
for example, Dodgson's
rule~\cite{bar-tov-tri:j:who-won,hem-hem-rot:j:dodgson,car-cov-fel-hom-kak-kar-pro-ros:c:dodgson,car-kak-kar-pro:c:dodgson-acceptable,fal-hem-hem:j:multimode},
Kemeny's
rule~\cite{bar-tov-tri:j:who-won,hem-spa-vog:j:kemeny,ail-cha-new:j:kemeny-approx,cop-fla-rud:j:kemeny-approx,ken-sch:c:kemeny-few-errors},
Young's
rule~\cite{rot-spa-vog:j:young,car-cov-fel-hom-kak-kar-pro-ros:c:dodgson},
and Ranked Pairs method~\cite{bri-fis:c:ranked-pairs}.  Hardness of
winner determination for multiwinner voting rules was studied by
Procaccia, Rosenschein, and Zohar~\cite{complexityProportionalRepr},
by Lu and Boutilier~\cite{budgetSocialChoice}, and by Betzler, Slinko
and Uhlman~\cite{fullyProportionalRepr}. Lu and Boutilier, while
starting from a different context, initiated the study of
approximation algorithms in this setting, which we continue and extend
in this paper.

In the context of resource allocation, our model closely resembles
multi-unit resource allocation with single-unit demand~\cite[Chapter
11]{ley-sho:b:multiagent-systems} (see also the work of Chevaleyre et
al.~\cite{Chevaleyre06issuesin} for a survey of the most fundamental
issues in the multiagent resource allocation theory). The problem of
multi-unit resource allocation is mostly addressed in the context of
auctions (and so it is referred in the literature as multi-unit
auctions); in contrast, we consider the problem of finding a solution
maximizing the social welfare given the agents' preferences. More
generally, our model, is similar to resource allocation with sharable
indivisible goods~\cite{Chevaleyre06issuesin, AiriauEndrissAAMAS2010}.
The most substantial difference is that we require each agent to be
assigned to exactly one alternative. Also, in the context of resource
allocation with sharable items, it is often assumed that the agents'
satisfaction is affected by the number of agents using the
alternatives (the congestion on the alternatives). This forms a class
of problems that are closely related to congestion
games~\cite{rosenthal73congestion}.  Finally, it is worth mentioning
that in the literature on resource allocation it is common to consider
other criteria of optimality, such as
envy-freeness~\cite{Lipton:2004:AFA:988772.988792}, Pareto optimality,
Nash equilibria~\cite{AiriauEndrissAAMAS2010}, and others.

Finally, we should mention that our paper is very close in spirit
(especially in terms of the motivation of the resource allocation
problem) to the recent work of Darmann et al.~\cite{dar-elk-kur-lan-sch-woe:t:group-activity}.

\section{Preliminaries}\label{sec:prelim}
We first define basic notions such as, e.g., preference orders and
positional scoring rules. Then we present our resource allocation
problem in full generality and discuss restrictions modeling Monroe's
and Chamberlin and Courant's voting systems. Finally, we briefly
recall relevant notions regarding computational complexity
theory. \medskip

\noindent
\textbf{Alternatives, Profiles, Positional Scoring Functions.}\quad
For each $n \in \naturals$, we take $[n]$ to mean $\{1, \ldots, n\}$.
We assume that there is a set $N = [n]$ of \emph{agents} and a set $A
= \{a_{1}, \dots a_{m}\}$ of \emph{alternatives}. Each agent $i$ has
\emph{weight} $w_{i} \in \naturals$, and each alternative $a$ has
\emph{capacity} $\capacity_a \in \naturals$ and \emph{cost} $c_a \in
\naturals$. The weight of an agent corresponds to its size (measured
in some abstract way). An alternative's capacity gives the total
weight of the agents that can be assigned to it, and its cost gives
the price of selecting the alternative (the price is the same
irrespective of the weight of the agents assigned to the alternative).
Further, each agent $i$ has a \emph{preference order} $\succ_i$ over
$A$, i.e., a strict linear order of the form $a_{\pi(1)} \succ_{i}
a_{\pi(2)} \succ_{i} \dots \succ_{i} a_{\pi(m)}$ for some permutation
$\pi$ of $[m]$. For an alternative $a$, by $\pos_i(a)$ we mean the
position of $a$ in $i$'th agent's preference order. For example, if
$a$ is the most preferred alternative for $i$ then $\pos_i(a) = 1$,
and if $a$ is the most despised one then $\pos_i(a) = m$.  A
collection $V = (\succ_1, \ldots, \succ_n)$ of agents' preference
orders is called a \emph{preference profile}.  We write $\calL(A)$ to
denote the set of all possible preference orders over $A$. Thus, for
preference profile $V$ of $n$ agents we have $V \in \calL(A)^n$.

In our computational hardness proofs, we will often include subsets of
alternatives in the descriptions of preference orders. For example, if
$A$ is the set of alternatives and $B$ is some nonempty strict subset
of $A$, then by saying that some agent has preference order of the
form $B \succ A-B$, we mean that this agent ranks all the alternatives
in $B$ ahead of all the alternatives outside of $B$, and that the
order in which this agent ranks alternatives within $B$ and within
$A-B$ is irrelevant (and, thus, one can assume any easily computable
order).

A \emph{positional scoring function} (PSF) is a function $\alpha^m:
[m] \rightarrow \naturals$. A PSF $\alpha^m$ is an \emph{increasing
  positional scoring function} (IPSF) if for each $i,j \in [m]$, if $i
< j$ then $\alpha^m(i) < \alpha(j)$. Analogously, a PSF $\alpha^m$ is
a \emph{decreasing positional scoring function} (DPSF) if for each
$i,j \in [m]$, if $i < j$ then $\alpha^m(i) > \alpha^m(j)$.

Intuitively, if $\beta^m$ is an IPSF then $\beta^m(i)$ gives the
\emph{dissatisfaction} that an agent suffers from when assigned to an
alternative that is ranked $i$'th on his or her preference
order. Thus, we assume that for each IPSF $\beta^m$ it holds that
$\beta^m(1) = 0$ (an agent is not dissatisfied by his or her top
alternative).  Similarly, a DPSF $\gamma^m$ measures an agent's
satisfaction and we assume that for each DPSF $\gamma^m$ it holds that
$\gamma^m(m)=0$. 

We will often speak of families $\alpha$ of IPSFs (DPSFs) of the form
$\{ \alpha^m \mid m \in \naturals, \alpha^m \mbox{ is a PSF}\}$, where
the following holds: 
\begin{enumerate}
\fixlist
\item For IPSFs, for each $m \in \naturals$ it
  holds that $(\forall i \in [m])[ \alpha^{m+1}(i) = \alpha^m(i)]$.
\item For DPSFs, for each $m \in \naturals$ it holds that $(\forall i
  \in [m])[ \alpha^{m+1}(i+1) = \alpha^m(i)]$.
\end{enumerate}
In other words, we build our families of IPSFs (DPSFs) by appending
(prepending) values to functions with smaller domains.  We assume that
each function $\alpha^m$ from a family can be computed in polynomial
time with respect to $m$.  To simplify notation, we will refer to such
families of IPSFs (DPSFs) as \emph{normal} IPSFs (normal
DPSFs).

We are particularly interested in normal IPSFs (normal DPSFs)
corresponding to the Borda count method. That is, in the families of
IPSFs $\alpha^{m}_{\bordainc}(i) = i-1$ (in the families of DPSFs
$\alpha^{m}_{\bordadec}(i) = m - i$).\medskip

\noindent
\textbf{Our Resource Allocation Problem.}\quad
We consider a problem of finding function $\Phi: N \rightarrow A$ that
assigns each agent to some alternative (we will call $\Phi$ an
\emph{assignment function}). We say that $\Phi$ is feasible if for
each alternative $a$ it holds that the total weight of the agents
assigned to it does not exceed its capacity $\capacity_a$.  Further,
we define the cost of assignment $\Phi$ to be $\cost(\Phi) = \sum_{a:
  \Phi^{-1}(a) \neq \emptyset}c_{a}$.

Given an IPSF (DPSF) $\alpha^m$, we consider two \emph{dissatisfaction
  functions}, $\ell_{1}^{\alpha}(\Phi)$ and
$\ell_\infty^\alpha(\Phi)$, (two \emph{satisfaction functions},
$\ell_{1}^{\alpha}(\Phi)$ and $\min^\alpha(\Phi)$), measuring the
quality of the assignment as follows:
\begin{enumerate}
\fixlist
\item $\ell_{1}^{\alpha}(\Phi) = \sum_{i=1}^{n}\alpha(\pos_{i}(\Phi(i)))$.
\item $\ell_{\infty}^{\alpha}(\Phi) = \mathrm{max}_{i =
    1}^{n}\alpha(\pos_{i}(\Phi(i)))$ (or, $\min^{\alpha}(\Phi) =
  \mathrm{min}_{i = 1}^{n}\alpha(\pos_{i}(\Phi(i)))$).
\end{enumerate}
The former one measures agents' total dissatisfaction (satisfaction),
whereas the latter one considers the most dissatisfied (the least
satisfied) agent only. In welfare economics and multiagent resource
allocation theory the two metrics correspond to, respectively,
utilitarian and egalitarian social welfare.
We define our resource allocation problem as follows.

\begin{definition}\label{def:assignment}
  Let $\alpha$ be a normal IPSF. An instance of
  $\alpha$-\textsc{Assignment-Inc} problem consists of a set of agents
  $N = [n]$, a set of alternatives $A = \{a_1, \ldots
  a_m\}$, a preference profile $V$ of the agents, a sequence $(w_1,
  \ldots, w_n)$ of agents' weights, sequences $(\capacity_{a_1},
  \ldots, \capacity_{a_m})$ and $(c_{a_1}, \ldots, c_{a_m})$ of
  alternatives' capacities and costs, respectively, and budget $B \in
  \naturals$. We ask for the assignment function $\Phi$ such
  that:
  \begin{enumerate}
  \fixlist
  \item $\cost(\Phi) \leq B$,
  \item $\forall_{a \in A} \sum_{i: \Phi(i) = a}w_{i} \leq cap_{a}$, and
  \item $\ell_{1}^\alpha(\Phi)$ is minimized.
  \end{enumerate}
\end{definition}

In other words, in $\alpha$-\textsc{Assignment-Inc} we ask for a
feasible assignment that minimizes the total dissatisfaction of the
agents without exceeding the budget.

Problem $\alpha$-\textsc{Assignment-Dec} is defined identically except
that $\alpha$ is a normal DPSF and in the third condition we seek to
maximize $\ell_{1}^\alpha(\Phi)$ (that is, in
$\alpha$-\textsc{Assignment-Dec} our goal is to maximize total
satisfaction). If we replace $\ell^\alpha_1$ with $\ell_\infty^\alpha$
in $\alpha$-\textsc{Assignment-Inc} then we obtain problem
$\alpha$-\textsc{Minmax-Assignment-Inc}, where we seek to minimize the
dissatisfaction of the most dissatisfied agent.  If we replace
$\ell^\alpha_1$ with $\min^\alpha$ in $\alpha$-\textsc{Assignment-Dec}
then we obtain problem $\alpha$-\textsc{Minmax-Assign\-ment-Dec},
where we seek to maximize the satisfaction of the least satisfied
agent.

As far as optimal solutions go, satisfaction and dissatisfaction
formulations of our problems are equivalent. However, as we will see,
there are striking differences in terms of their approximability.

Clearly, each of our four \textsc{Assignment} problems is
$\np$-complete: Even without costs they reduce to the standard
$\np$-complete \textsc{Partition} problem, where we ask if a set of
integers (in our case these integers would be agents' weights) can be
split evenly between two sets (in our case, two alternatives with the
capacities equal to half of the total agent weight). However, in very
many applications (for example, in the sport classes example from the
introduction) it suffices to consider unit-weight agents.  Thus, from
now on we assume the agents have unit weights.

Our four problems can be viewed as generalizations of
Monroe's~\cite{monroeElection} and Chamberlin and
Courant's~\cite{ccElection} multiwinner voting systems (see the
introduction for their definitions). To model Monroe's system, it
suffices to set the budget $B = K$, the cost of each alternative to be
$1$, and the capacity of each alternative to be $ \frac{\|N\|}{K}$
(for simplicity, throughout the paper we assume that $K$ divides
$\|N\|$). We will refer to thus restricted variants of our problems as
\textsc{Monroe-Assignment} variants.  To represent Chamberlin and
Courant's system within our framework, it suffices to take the same
restrictions as for Monroe's system, except that each alternative has
capacity equal to $\|N\|$.  We will refer to thus restricted variants
of our problems as \textsc{CC-Assignment} variants.

The eight above-defined special cases of our resource allocation
problem were, in various forms and shapes, considered by Procaccia,
Rosenschein, and Zohar~\cite{complexityProportionalRepr}, Lu and
Boutilier~\cite{budgetSocialChoice}, and Betzler, Slinko and
Uhlmann~\cite{fullyProportionalRepr}.\medskip

\noindent\textbf{Computational Complexity, Approximation Algorithms}.\quad
For many normal IPSFs $\alpha$ (and, in particular, for Borda count),
even the above-mentioned restricted versions of the original problem,
namely, $\alpha$-\textsc{Monroe-Assignment-Inc},
$\alpha$-\textsc{Minmax-Monroe-Assignment-Inc},
$\alpha$-\textsc{CC-Assignment-Inc}, and
$\alpha$-\textsc{Minmax-CC-Assignment-Inc} are
$\np$-complete\cite{fullyProportionalRepr, complexityProportionalRepr}
(the same holds for normal DPSFs and \textsc{Dec} variants of the
problems). Thus, we explore possibilities for approximate solutions.

\begin{definition}
  Let $\beta$ be a real number such that $\beta \geq 1$ ($0 < \beta
  \leq 1$) and let $\alpha$ be a normal IPSF (a normal DPSF).  An
  algorithm is a $\beta$-approximation algorithm for
  $\alpha$-\textsc{Assignment-Inc} problem (for
  $\alpha$-\textsc{Assignment-Dec} problem) if on each
  instance $I$ it returns a feasible assignment $\Phi$ that meets the
  budget restriction and such that $\ell_{1}^{\alpha}(\Phi) \leq \beta
  \cdot \OPT$ (and such that $\ell_{1}^{\alpha}(\Phi) \geq
  \beta\cdot \OPT$), where $\OPT$ is the aggregated dissatisfaction
  (satisfaction) $\ell_{1}^{\alpha}(\Phi_\OPT)$ of the optimal
  assignment $\Phi_\OPT$.
\end{definition}

We define $\beta$-approximation algorithms for the \textsc{Minmax}
variants of our problems analogously. For example, Lu and
Boutilier~\cite{budgetSocialChoice} present a $(1 -
\frac{1}{e})$-approximation algorithm for the case of
\textsc{CC-Assignment-Dec}.

Throughout this paper, we will consider
each of the \textsc{Monroe-Assignment} and \textsc{CC-Assignment}
variants of the problem and for each we will either prove
inapproximability with respect to any constant $\beta$ (under
standard complexity-theoretic assumptions) or we will show an approximation algorithm.  In our
inapproximability proofs, we will use the following classic
$\np$-complete problems~\cite{gar-joh:b:int}.

\begin{definition}
  An instance $I$ of \textsc{Set-Cover} consists of set $U = [n]$
  (called the ground set), family $\calF = \{F_{1}, F_{2}, \dots,
  F_{m}\}$ of subsets of $U$, and positive integer $K$. We ask if
  there exists a set $I \subseteq [m]$ such that $\|I\|
  \leq K$ and $\bigcup_{i\in I}F_i = U$.
\end{definition}

\begin{definition}
  \textsc{Vertex-Cover} is a special case of \textsc{Set-Cover}, where
  $U$ and $\calF$ are constructed from a given graph
  $G$. Specifically, $U$ is the set of $G$'s edges and $\calF = \{F_1,
  \ldots, F_n\}$ corresponds to $G$'s vertices (for each vertex $v$ of
  $G$, $\calF$ has a corresponding set $F$, which contains the edges
  incident to $v$).
\end{definition}

\begin{definition}
  \textsc{X3C} is a special case of \textsc{Set-Cover} where $\|U\|$
  is divisible by $3$, each member of $\calF$ has exactly three
  elements, and $K = \frac{n}{3}$.
\end{definition}


Note that \textsc{X3C} remains $\np$-complete even if we additionally
assume that $n$ is divisible by $2$ and each member of $U$ appears in
at most $3$ sets from $\calF$~\cite{gar-joh:b:int}.

\section{Hardness of Approximation}\label{sec:approximation}

In this section we present our inapproximability results for
\textsc{Monroe-Assignment} and \textsc{CC-Assignment} variants of the
resource allocation problem. In particular, we show that if we focus
on voter dissatisfaction (i.e., on the \textsc{Inc} variants) then for
each $\beta > 1$, neither Monroe's nor Chamberlin and Courant's system
has a polynomial-time $\beta$-approximation algorithm.  Further, we
show that analogous results hold if we focus on the satisfaction of
the least satisfied voter.

Naturally, these inapproximability results carry over to more general settings.
In particular, unless $\p = \np$, there are no polynomial-time constant-factor
approximation algorithms for the general resource allocation problem for the
case where we focus on voter dissatisfaction. On the other hand, our results
do not preclude good approximation algorithms for the case where we measure
agents' total satisfaction. Indeed, in Section~\ref{sec:algorithms} we derive algorithms
for several satisfaction-based special cases of the problem.

We start by showing that for each normal IPSF $\alpha$ there is no
constant-factor polynomial-time approximation algorithm for
$\alpha$-\textsc{Monroe-Assignment-Inc} (and, thus, there is no such
algorithm for general $\alpha$-\textsc{Assignment-Inc}).

\begin{theorem}\label{theorem:noApprox1}
  For each normal IPSF $\alpha$ and each constant factor $\beta$,
  $\beta > 1$, there is no polynomial-time $\beta$-approximation
  algorithm for $\alpha$-\textsc{Monroe-Assignment-Inc} unless $\p = \np$.
\end{theorem}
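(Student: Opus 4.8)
The plan is to give a polynomial-time gap reduction from \textsc{X3C}, using the structural fact that every normal IPSF satisfies $\alpha(1) = 0$ and is strictly increasing (so, being integer-valued, $\alpha(p) \geq p-1$ and $\alpha$ is unbounded). The reduction will be parameterized by the target ratio $\beta$: from an \textsc{X3C} instance with ground set $U$, family $\calF$ of $3$-element sets, and parameter $K = \frac{\|U\|}{3}$, I build an $\alpha$-\textsc{Monroe-Assignment-Inc} instance in which exact covers correspond to cheap assignments and the absence of an exact cover forces at least one agent deep into its preference order, to some position $p = p(\beta,\alpha)$ with $\alpha(p)$ enormous. Since a $\beta$-approximation algorithm must return an assignment $\Phi$ with $\ell_1^\alpha(\Phi) \leq \beta \cdot \OPT$, running it and comparing the returned cost against a threshold would decide \textsc{X3C} in polynomial time, contradicting $\p \neq \np$.

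Concretely, I would take one agent per element of $U$ and one alternative per set of $\calF$, placing us in the Monroe regime by setting all costs to $1$, the budget to $K$, and every capacity to $\frac{\|N\|}{K} = 3$; with $\|N\| = \|U\| = 3K$ this is a legal Monroe instance (and $K \mid \|N\|$). The preference orders are the crux: each agent ranks at the very top the (at most three) sets that contain its element, and arranges all remaining alternatives so that any set that does \emph{not} contain the element appears only beyond position $p$. The budget and the uniform capacity together force a feasible assignment to select $K$ sets that exactly exhaust the total capacity $3K = \|N\|$, so every agent must be matched to one of the selected sets. In the yes-direction I would read a feasible assignment off an exact cover, matching each agent to a selected set containing its element; each agent then sits at a top position, bounding $\OPT$ above by a value that is either $0$ or at most $\|N\|\cdot\alpha(3)$. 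In the no-direction, any feasible selection of $K$ sets fails to cover some element, and that agent is forced onto a selected non-containing set at a position past $p$, so $\ell_1^\alpha(\Phi) \geq \alpha(p) \geq p-1$. Choosing $p$ polynomially large in $\beta$, $\|U\|$, and $\alpha(3)$ makes the no-instance cost exceed $\beta$ times the yes-instance cost, yielding the gap; the whole construction is polynomial for each fixed $\beta$ and uses only $\alpha(1)=0$, monotonicity, and poly-time computability, so it is uniform over all normal IPSFs.

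The main obstacle is making the large penalty genuinely unavoidable in no-instances. Because the Monroe regime makes every alternative share the same unit cost and the same capacity $3$, the algorithm has complete freedom over \emph{which} $K$ sets to select, and I must ensure it cannot sidestep the penalty by selecting sets that an otherwise-uncovered agent happens to rank shallowly. The delicate part of the argument is therefore the design of the preference profile (and any padding alternatives used to push fallback positions out to $p$) together with the capacity-tightness argument, so that the deep position is forced against \emph{every} feasible assignment rather than merely the intended one. I expect the yes-direction and the final gap computation to be routine once the profile is fixed; the robustness of the no-direction lower bound is where the real work lies.
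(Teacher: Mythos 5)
Your high-level plan---a gap reduction from \textsc{X3C} in the Monroe regime (unit costs, budget $K=\frac{n}{3}$, capacity $3$), with exact covers yielding cheap assignments, a threshold test on the returned cost, and the fact $\alpha(p)\geq p-1$ for normal IPSFs---is exactly the paper's strategy, and your yes-direction and gap computation would indeed be routine. But the step you defer (``the design of the preference profile \ldots{} so that the deep position is forced against \emph{every} feasible assignment'') is not a detail to be filled in later; it is the entire content of the proof, and without it the construction you describe is either impossible or incorrect. Without padding alternatives your requirement cannot even be stated: with one alternative per set, non-containing sets necessarily occupy positions $4,\dots,m$, and since $m\leq n$ the no-instance cost for Borda is at most roughly $n\alpha(3)+3(m-1)=O(n)$, which falls below the threshold $\beta n\alpha(3)$ once $\beta$ is moderately large. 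With padding placed naively, the reduction is outright wrong: take a no-instance that admits $K-1$ disjoint sets covering all but three elements; the algorithm may select those $K-1$ sets plus one padding alternative, and if the three uncovered agents all rank that padding alternative shallowly (say near position $4$), the assignment costs about $(n-3)\alpha(3)+3\alpha(4)\leq \beta n\alpha(3)$ for large $n$, so the threshold test misclassifies the instance. This is precisely the escape route you worried about, and your proposal contains no mechanism to block it.

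The missing idea---the heart of the paper's proof---is a placement of the dummies that plays against Monroe's quota. The paper inserts $m'=\frac{n^{2}\alpha(3)\beta}{2}$ dummy alternatives between $M_f(j)$ and $M_l(j)$ in every agent $j$'s order, arranged so that each dummy $b_i$ appears among the top $n\alpha(3)\beta$ positions of \emph{at most two} agents (concretely, $b_i$ is shallow for agent $j$ iff $(i+j)\bmod n<2$; the count works out because each agent then has exactly $m'\cdot\frac{2}{n}=n\alpha(3)\beta$ shallow dummies). Since the budget $K=\frac{n}{3}$ and capacity $3$ force every selected alternative to serve exactly three agents, any selected dummy must serve at least one agent who ranks it beyond position $n\alpha(3)\beta$, contributing dissatisfaction exceeding $n\alpha(3)\beta$ on its own; alternatives in $M_l(j)$ lie deeper still. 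Hence every feasible assignment of cost at most $n\alpha(3)\beta$ uses only set-alternatives matched to agents they contain, i.e., encodes an exact cover. The property ``fewer than $\|N\|/K$ agents rank each padding alternative shallowly'' is what your construction needs to formulate and verify; without it, the no-direction lower bound you are counting on simply does not hold.
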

\begin{proof}
  Let us fix a normal IPSF $\alpha$ and let us assume, for the sake of
  contradiction, that there is some constant $\beta$, $\beta > 1$, and
  a polynomial-time $\beta$-approximation algorithm $\calA$ for
  $\alpha$-\textsc{Monroe-Assignment-Inc}.

  Let $I$ be an instance of \textsc{X3C} with ground set $U = [n]$ and
  family $\calF = \{F_{1}, F_{2}, \dots, F_{m}\}$ of $3$-element
  subsets of $U$. W.l.o.g., we assume that $n$ is divisible by both
  $2$ and $3$ and that each member of $U$ appears in at most 3 sets
  from $\calF$.

  Given $I$, we build instance $I_{M}$ of
  $\alpha$-\textsc{Monroe-Assignment-Inc} as follows.  We set $N = U$
  (that is, the elements of the ground set are the agents) and we set $A
  = A_1 \cup A_2$, where $A_1 = \{a_1, \ldots, a_m\}$ is a set of
  alternatives corresponding to the sets from the family $\calF$ and
  $A_2$, $\|A_2\| = \frac{n^{2} \cdot \alpha(3) \cdot \beta}{2}$, is a
  set of dummy alternatives needed for our construction. We let $m' =
  \|A_2\|$ and we rename the alternatives in $A_2$ so that $A_2 =
  \{b_1, \ldots, b_{m'}\}$.  We set $K = \frac{n}{3}$.

  We build agents' preference orders using the following algorithm.
  For each $j \in N$, set $M_f(j) = \{ a_i \mid j \in F_i\}$ and $M_l
  = \{ a_i \mid j \not\in F_i \}$. Set $m_f(j) = \|M_f(j)\|$ and
  $m_l(j) = \|M_l(j)\|$; as the frequency of the elements from $U$ is
  bounded by 3, $m_f(j) \leq 3$. For each agent $j$ we set his or her
  preference order to be of the form $M_f(j) \succ_j A_2 \succ_j
  M_l(j)$, where the alternatives in $M_f(j)$ and $M_l(j)$ are ranked
  in an arbitrary way and the alternatives from $A_2$ are placed at
  positions $m_{f}(j) + 1, \dots, m_{f}(j) + m'$ in the way described
  below (see Figure \ref{fig:diag1} for a high-level illustration of
  the construction).
  
  \begin{figure}[tb]
    \begin{center}
      \includegraphics[scale=0.75]{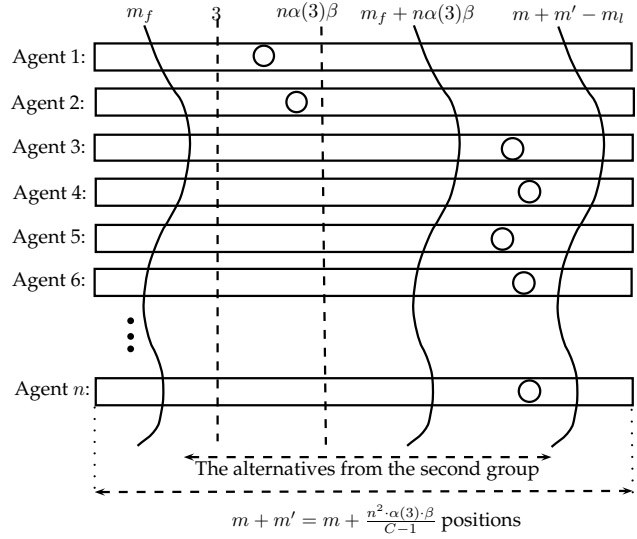}
    \end{center}
    \vspace{-0.75cm}
    \caption{The alignment of the positions in the preference orders
      of the agents. The positions are numbered from the left to the
      right. The left wavy line shows the positions $m_{f}(\cdot)$,
      each no greater than $3$. The right wavy line shows the
      positions $m_{l}(\cdot)$, each higher than $n \cdot \alpha(3)
      \cdot \beta$. The alternatives from $A_{2}$ (positions of one
      such an alternative is illustrated with the circle) are placed
      only between the peripheral wavy lines. Each alternative from
      $A_{2}$ is placed on the left from the middle wavy line exactly
      2 times, thus each such alternative is placed on the left from
      the right dashed line no more than $2$ times (exactly two times
      at the figure).}
    \label{fig:diag1}
  \end{figure}

  We place the alternatives from $A_{2}$ in the preference orders of
  the agents in such a way that for each alternative $b_i \in A_{2}$
  there are at most two agents that rank $b_i$ among their $n \cdot
  \alpha(3) \cdot \beta$ top alternatives.  The following construction
  achieves this effect.  If $(i + j) \,\mathrm{mod}\, n < 2$, then
  alternative $b_{i}$ is placed at one of the positions $m_{f}(j) + 1,
  \dots, m_{f}(j) + n \cdot \alpha(3) \cdot \beta$ in $j$'s preference
  order. Otherwise, $b_i$ is placed at a position with index higher
  than $m_{f}(j) + n \cdot \alpha(3) \cdot \beta$ (and, thus, at a
  position higher than $n \cdot \alpha(3)\cdot \beta$). This
  construction can be implemented because for each agent $j$ there are
  exactly $m' \cdot \frac{2}{n} = n \cdot \alpha(3) \cdot \beta$
  alternatives $b_{i_{1}}, b_{i_{2}}, b_{i_{n \alpha(3) \beta}}$ such
  that $(i_{k} + j) \,\mathrm{mod}\, n < 2$.

  Let $\Phi$ be an assignment computed by $\calA$ on $I_{M}$.  We
  will show that $\ell_{1}^{\alpha}(\Phi) \leq n \cdot \alpha(3) \cdot
  \beta$ if and only if $I$ is a \emph{yes}-instance of \textsc{X3C}.

  ($\Leftarrow$) If there exists a solution for $I$ (i.e., an exact
  cover of $U$ with $\frac{n}{3}$ sets from $\calF$), then we can
  easily show an assignment in which each agent $j$ is assigned to an
  alternative from the top $m_{f}(j)$ positions of his or her
  preference order (namely, one that assigns each agent $j$ to the
  alternative $a_i \in A_1$ that corresponds to the set $F_i$, from
  the exact cover of $U$, that contains $j$). Thus, for the optimal
  assignment $\Phi_\OPT$ it holds that $\ell_{1}^{\alpha}(\Phi_\OPT) \leq
  \alpha(3) \cdot n$. In consequence, $\mathcal{A}$ must return an
  assignment with the total dissatisfaction at most $n \cdot \alpha(3)
  \cdot \beta$.

  ($\Rightarrow$) Let us now consider the opposite direction. We
  assume that $\calA$ found an assignment $\Phi$ such that
  $\ell_{1}^{\alpha}(\Phi) \leq n \cdot \alpha(3) \cdot \beta$ and we
  will show that $I$ is a \emph{yes}-instance of \textsc{X3C}.  Since
  we require each alternative to be assigned to either $0$ or $3$
  agents, if some alternative $b_{i}$ from $A_2$ were assigned to some
  $3$ agents, at least one of them would rank him or her at a position
  worse than $n \cdot \alpha(3) \cdot \beta$. This would mean that
  $\ell_{1}^{\alpha}(\Phi) \geq n \cdot \alpha(3) \cdot \beta +
  1$. Analogously, no agent $j$ can be assigned to an alternative that
  is placed at one of the $m_l(j)$ bottom positions of $j$'s
  preference order. Thus, only the alternatives in $A_1$ have agents
  assigned to them and, further, if agents $x$, $y$, $z$, are assigned
  to some $a_i \in A_1$, then it holds that $F_i = \{x,y,z\}$ (we will
  call each set $F_i$ for which alternative $a_i$ is assigned to some
  agents $x,y,z$ \emph{selected}). Since each agent is assigned to
  exactly one alternative, the selected sets are disjoint. Since the
  number of selected sets is $K = \frac{n}{3}$, it must be the case
  that the selected sets form an exact cover of $U$. Thus, $I$ is a
  \emph{yes}-instance of \textsc{X3C}.
\end{proof}

One may wonder if hardness of approximation for
$\alpha$-\textsc{Monroe-Assignment-Inc} is not an artifact of the
strict requirements regarding the budget.  It turns out that unless
$\p = \np$, there is no $\beta$-$\gamma$-approximation algorithm that
finds an assignment with the following properties: (1) the aggregated
dissatisfaction $\ell_{1}^{\alpha}(\Phi)$ is at most $\beta$ times
higher than the optimal one, (2) the number of alternatives to which
agents are assigned is at most $\gamma K$ and (3) each selected alternative (the
alternative that has agents assigned), is assigned to no more than $\gamma \lceil
\frac{n}{K} \rceil$ and no less than $\frac{1}{\gamma} \lceil
\frac{n}{K}\rceil$ agents. (The proof is similar to the one used for
Theorem~\ref{theorem:noApprox1}.)  Thus, in our further study we do
not consider relaxations of the budget constraints.


\begin{theorem}\label{theorem:noApprox2}
  For each normal IPSF $\alpha$ and each constant $\beta$, $\beta >
  1$, there is no polynomial-time $\beta$-approximation algorithm for
  $\alpha$-\textsc{Minmax-Monroe-Assignment-Inc} unless $\p = \np$.
\end{theorem}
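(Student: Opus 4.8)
The plan is to reduce from \textsc{X3C} again, reusing essentially verbatim the instance $I_M$ built in the proof of Theorem~\ref{theorem:noApprox1}, but analyzing it against the egalitarian objective $\ell_\infty^\alpha$ rather than $\ell_1^\alpha$. So I would assume, for contradiction, a polynomial-time $\beta$-approximation algorithm $\calA$ for $\alpha$-\textsc{Minmax-Monroe-Assignment-Inc} with a fixed $\beta>1$, and from an \textsc{X3C} instance $I$ (with $n$ divisible by $2$ and $3$, each element in at most three sets) I would construct $I_M$ exactly as before: agents $N=U$, alternatives $A=A_1\cup A_2$ with $A_1$ encoding the sets and $A_2$ the dummies, $K=\frac{n}{3}$, and each preference order of the form $M_f(j)\succ_j A_2\succ_j M_l(j)$ with the same cyclic placement, which guarantees that every dummy lies among any agent's top $n\cdot\alpha(3)\cdot\beta$ positions for at most two agents.

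The crucial quantity is the threshold position. Since $\alpha$ is a normal IPSF, it is strictly increasing with $\alpha(1)=0$, hence $\alpha(p)\ge p-1$, so there is a \emph{constant} $p^\star$, namely the least integer with $\alpha(p^\star)>\beta\alpha(3)$ (in fact $p^\star\le\beta\alpha(3)+2$). The key observation is that the wide top region in the inherited construction subsumes the narrow one I actually need: for $n\ge 2$ the top $n\cdot\alpha(3)\cdot\beta$ positions contain the first $p^\star-1$ positions, so the construction's guarantee implies, \emph{a fortiori}, that each dummy occupies the top $p^\star-1$ positions of at most two agents. This is exactly the property that makes a dummy useless under the minmax objective.

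With this in hand I would prove the two directions and close with a gap argument. For the (yes) direction, an exact cover yields a feasible assignment placing every agent within its top three positions, so $\OPT\le\alpha(3)$, and therefore $\calA$ must output $\Phi$ with $\ell_\infty^\alpha(\Phi)\le\beta\cdot\OPT\le\beta\alpha(3)$. For the (no) direction, suppose some feasible $\Phi$ had $\ell_\infty^\alpha(\Phi)\le\beta\alpha(3)$; then every agent is matched within its top $p^\star-1$ positions. Hence no agent $j$ is matched inside $M_l(j)$ (those alternatives sit far below $p^\star$), so each selected $A_1$-alternative $a_i$ receives only agents of $F_i$ and thus exactly $F_i$; moreover no dummy can be selected, since Monroe-feasibility forces each selected alternative to take exactly $\frac{n}{K}=3$ agents, yet a dummy is in the top $p^\star-1$ positions of at most two of them, so its third agent would sit at position $\ge p^\star$ and suffer dissatisfaction $>\beta\alpha(3)$ — a contradiction. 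Consequently all $\frac{n}{3}$ selected alternatives lie in $A_1$ and equal their sets, which partition $U$, i.e.\ form an exact cover, contradicting the no-instance. Running $\calA$ and accepting iff its output satisfies $\ell_\infty^\alpha\le\beta\alpha(3)$ then decides \textsc{X3C} in polynomial time, forcing $\p=\np$.

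I expect the only delicate point to be the bookkeeping around $p^\star$: verifying that it is a genuine constant (independent of $n$) and that the inherited top region of width $n\cdot\alpha(3)\cdot\beta$ really does contain the first $p^\star-1$ positions for all admissible $n$, so that ``good for at most two agents in the top $n\alpha(3)\beta$ positions'' downgrades cleanly to ``good for at most two agents in the top $p^\star-1$ positions.'' Everything else transfers directly from Theorem~\ref{theorem:noApprox1} once the aggregation is switched from the sum to the maximum, because the single misplaced agent that wrecked the $\ell_1^\alpha$ bound there is precisely the agent that now certifies $\ell_\infty^\alpha>\beta\alpha(3)$ here.
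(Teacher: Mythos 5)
Your proposal is correct and follows exactly the route the paper takes: the paper's proof of Theorem~\ref{theorem:noApprox2} simply states that the construction and argument of Theorem~\ref{theorem:noApprox1} carry over (noting only that the smaller dummy set $\|A_2\| = \frac{n\cdot\alpha(3)\cdot\beta}{2}$ already suffices), which is precisely what you do, with the threshold position $p^\star$ spelling out why ``at most two agents rank a dummy in the wide top region'' kills dummies under the $\ell_\infty^\alpha$ objective just as it did under $\ell_1^\alpha$.
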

\begin{proof}
  The proof of Theorem~\ref{theorem:noApprox1} applies to this case as
  well. (In fact, it even suffices to take $m' = \|A_2\| = \frac{n
    \cdot \alpha(3) \cdot \beta}{2}$).
\end{proof}

Results analogous to Theorems~\ref{theorem:noApprox1}
and~\ref{theorem:noApprox2} hold for the \textsc{CC-Assginment-Inc}
family of problems as well. 

\begin{theorem}\label{theorem:noApprox3}
  For each normal IPSF $\alpha$ and each constant factor $\beta$,
  $\beta > 1$, there is no polynomial-time $\beta$-approximation
  algorithm for $\alpha$-\textsc{CC-Assignment-Inc} unless $\p = \np$.
\end{theorem}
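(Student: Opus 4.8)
The plan is to adapt the reduction from \textsc{X3C} used in the proof of Theorem~\ref{theorem:noApprox1}, modifying it to account for the fact that in \textsc{CC-Assignment} each alternative has capacity $\|N\|$ rather than $\frac{\|N\|}{K}$. The overall structure stays the same: given a \textsc{X3C} instance with ground set $U = [n]$ and family $\calF = \{F_1, \ldots, F_m\}$, I would build an instance $I_{CC}$ of $\alpha$-\textsc{CC-Assignment-Inc} with agents $N = U$, alternatives $A = A_1 \cup A_2$ where $A_1$ corresponds to the sets in $\calF$ and $A_2$ is a collection of dummy alternatives, budget $B = K = \frac{n}{3}$, cost $1$ for each alternative, and each capacity set to $\|N\| = n$. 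The preference orders would again place, for each agent $j$, the alternatives $M_f(j) = \{a_i \mid j \in F_i\}$ at the top, then the alternatives of $A_2$, then the remaining $M_l(j)$ at the bottom, and the dummy alternatives would be distributed so that each $b_i \in A_2$ is ranked among the top $n \cdot \alpha(3) \cdot \beta$ positions by only a bounded number of agents.

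The forward direction $(\Leftarrow)$ is essentially unchanged: if $I$ has an exact cover, assigning each agent $j$ to the alternative $a_i$ corresponding to the covering set containing $j$ yields total dissatisfaction at most $\alpha(3) \cdot n$, so the $\beta$-approximation algorithm $\calA$ must return an assignment with $\ell_1^\alpha(\Phi) \leq n \cdot \alpha(3) \cdot \beta$. The reverse direction $(\Rightarrow)$ is where the capacity relaxation matters and is the main obstacle. Since $A_2$ alternatives now have capacity $n$, I can no longer argue (as in Theorem~\ref{theorem:noApprox1}) that three agents assigned to a dummy alternative force a bad ranking via the exactly-$3$-per-alternative constraint; instead, the only restriction is the budget $B = K = \frac{n}{3}$, which caps the number of selected alternatives. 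The key step is to choose $\|A_2\|$ large enough that each dummy alternative is ranked highly (within the top $n \cdot \alpha(3) \cdot \beta$ positions) by at most, say, one or two agents, so that any alternative selected from $A_2$ can contribute low dissatisfaction for only a bounded number of agents; every other agent assigned there incurs dissatisfaction exceeding $n \cdot \alpha(3) \cdot \beta$.

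Concretely, I would argue that if $\ell_1^\alpha(\Phi) \leq n \cdot \alpha(3) \cdot \beta$, then no agent can be assigned to any alternative placed in one of its bottom $m_l(j)$ positions, and the selected alternatives from $A_2$ can cover only a bounded, $o(n)$ number of agents at low cost. Since at most $K = \frac{n}{3}$ alternatives may be selected in total and each must serve agents from its high-ranked positions to keep the cost low, the bulk of the $n$ agents must be served by alternatives in $A_1$. Because each $F_i$ has exactly three elements, an alternative $a_i$ can only serve the (at most three) agents $j$ with $j \in F_i$ without incurring large dissatisfaction; combined with the budget of $\frac{n}{3}$ selected alternatives, this forces the selected $A_1$ alternatives to correspond to an exact cover of $U$. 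Thus $I$ is a \emph{yes}-instance. The delicate part of the calculation is fixing $\|A_2\|$ so that the dummy alternatives absorb too few agents to matter while keeping the instance polynomial in size; I expect $\|A_2\|$ on the order of $\frac{n^2 \cdot \alpha(3) \cdot \beta}{2}$ (or a similar polynomial quantity) to suffice, mirroring the earlier construction.
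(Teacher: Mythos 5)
Your proof is correct, but it takes a genuinely different route from the paper's. You reuse the \textsc{X3C} construction of Theorem~\ref{theorem:noApprox1} (dummy alternatives in $A_2$, each ranked among the top $n\cdot\alpha(3)\cdot\beta$ positions by at most two agents) and replace the Monroe capacity argument, which is unavailable under Chamberlin--Courant, with a budget-counting argument: every agent must be assigned within the cheap threshold, a selected dummy can cheaply serve at most $2$ agents, a selected $a_i \in A_1$ can cheaply serve at most the $3$ members of $F_i$, so with $s$ selected $A_1$-alternatives and $t$ selected dummies we need $3s + 2t \geq n$ while $s + t \leq K = \frac{n}{3}$, forcing $t = 0$, $s = \frac{n}{3}$, and each selected $a_i$ to serve exactly the members of $F_i$ --- an exact cover. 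The paper instead reduces from \textsc{Vertex-Cover}: it attaches to each set $F_j$ a whole block $A_j$ of $\alpha(2)\cdot\beta\cdot n$ alternatives and pads each agent's preference order with the block of one of the agent's \emph{own} sets, so that any assignment with small dissatisfaction places every agent with an alternative whose block corresponds to a set containing that agent; the at-most-$K$ used alternatives then directly yield a size-$K$ cover with no counting at all. Your approach buys economy (one construction handles both the Monroe and CC theorems), but it leans crucially on the arithmetic tightness of \textsc{X3C} ($3K = n$ exactly); the paper's block trick is more robust, since it needs only a covering (not exact-covering) structure and transfers immediately to the minmax variant (Theorem~\ref{theorem:noApprox4}) by shrinking the blocks. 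One imprecision in your write-up: the claim that the selected $A_2$-alternatives ``can cover only a bounded, $o(n)$ number of agents at low cost'' is false as stated --- selecting $\frac{n}{3}$ dummies covers up to $\frac{2n}{3} = \Theta(n)$ agents cheaply --- but this is not load-bearing, since your actual forcing step is the per-dummy bound of $2$ combined with the budget inequality above, which leaves those $\frac{n}{3}$ remaining agents unserved and so blows up the dissatisfaction anyway.
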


\begin{proof}
  Let us fix a normal IPSF $\alpha$.  For the sake of contradiction,
  let us assume that there is some constant $\beta$, $\beta > 1$, and a
  $\beta$-approximation algorithm $\calA$ for
  $\alpha$-\textsc{CC-Assignment-Inc}. We will show that it is
  possible to use $\calA$ to solve the $\np$-complete
  \textsc{Vertex-Cover} problem.

  Let $I = (U, \calF, K)$ be an instance of \textsc{Vertex-Cover},
  where $U = [n]$ is the ground set, $\calF = \{F_1, \ldots, F_m\}$ is
  a family of subsets of $U$ (where each member of $U$ belongs to exactly
  two sets in $\calF$), and $K$ is a positive integer. 

  Given $I$, we construct an instance $I_{CC}$ of
  $\alpha$-\textsc{CC-Assignment-Inc} in the following way: The set of
  agents is $N = U$ and the set of alternatives is $A =
  \bigcup_{j=1}^m A_j$, where each $A_j$ contains exactly $\alpha(2)
  \cdot \beta \cdot n$ (unique) alternatives. Intuitively, for each
  $j$, $1 \leq j \leq m$, the alternatives in $A_j$ correspond to the set
  $F_j$. For each $A_j$, $1 \leq j \leq m$, we pick one alternative,
  which we denote $a_j$.  For each agent $i \in N$, we set $i$'s
  preference order as follows: Let $F_j$ and $F_k$, $j < k$, be the
  two sets that contain $i$. Agent $i$'s preference order is of the
  form $a_j \succ_i a_k \succ_i A_k - \{a_k\} \succ_i A - (A_k \cup
  \{a_j,a_k\})$ (the particular order of alternatives in the sets
  $A_k-\{a_k\}$ and $A - (A_k \cup \{a_j,a_k\})$ is irrelevant for the
  construction). We ask for an assignment of the agents to at most $K$
  alternatives.

  %

  Let us consider a solution $\Phi$ returned by $\mathcal{A}$ on input
  $I_{CC}$. We claim that $\ell_{1}^{\alpha}(\Phi) \leq n \cdot
  \alpha(2) \cdot \beta$ if and only if $I$ is a \emph{yes}-instance
  of \textsc{Vertex-Cover}.

  ($\Leftarrow$) If $I$ is a \emph{yes}-instance then, clearly, each
  agent $i$ can be assigned to one of the top two alternatives in his
  or her preference order (if there is a size-$K$ cover, then this
  assignment selects at most $K$ candidates).  Thus the total
  dissatisfaction of an optimal assignment is at most $\alpha(2) \cdot
  n$. As a result, the solution $\Phi$ returned by $\calA$ has total
  dissatisfaction at most $\alpha(2) \cdot \beta \cdot n$.

  ($\Rightarrow$) If $\mathcal{A}$ returns an assignment with total
  dissatisfaction no greater than $\alpha(2) \cdot \beta \cdot n$,
  then, by the construction of agents preference orders, we see that
  each agent $i$ was assigned to an alternative from a set $A_j$ such
  that $i \in F_j$. Since the assignment can use at most $K$ alternatives,
  this directly implies that there is a size-$K$ cover of $U$ with sets
  from $\calF$.
\end{proof}

\begin{theorem}\label{theorem:noApprox4}
  For each normal IPSF $\alpha$ and each constant factor $\beta$,
  $\beta > 1$, there is no polynomial-time $\beta$-approximation
  algorithm for $\alpha$-\textsc{Minmax-CC-Assignment-Inc} unless $\p
  = \np$.
\end{theorem}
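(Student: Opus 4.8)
The plan is to reuse, almost verbatim, the reduction from \textsc{Vertex-Cover} built in the proof of Theorem~\ref{theorem:noApprox3}, changing only the bookkeeping of the objective: where that proof tracks the total dissatisfaction $\ell_1^\alpha$ against the threshold $n\cdot\alpha(2)\cdot\beta$, here I track the worst-agent dissatisfaction $\ell_\infty^\alpha$ against the threshold $\alpha(2)\cdot\beta$. Recall that in that construction each agent $i$ with incident sets $F_j,F_k$ ($j<k$) has preference order $a_j \succ_i a_k \succ_i A_k-\{a_k\} \succ_i A-(A_k\cup\{a_j,a_k\})$, so that the top $\|A_k\|+1$ positions of $i$ are occupied exactly by $\{a_j\}\cup A_k$ (all drawn from the two sets $F_j,F_k$ that contain $i$), while every alternative coming from a set not containing $i$ sits strictly below position $\|A_k\|+1$. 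This separation is what I will exploit.

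Assume, for contradiction, a polynomial-time $\beta$-approximation algorithm $\calA$ for $\alpha$-\textsc{Minmax-CC-Assignment-Inc}. First I would dispatch the easy direction: if $I$ is a \emph{yes}-instance of \textsc{Vertex-Cover}, then selecting $\{a_j : F_j \text{ is in the cover}\}$ lets every agent be matched to one of its top two alternatives, so the most dissatisfied agent suffers at most $\alpha(2)$; hence $\OPT\le\alpha(2)$ and $\calA$ must return an assignment $\Phi$ with $\ell_\infty^\alpha(\Phi)\le\beta\cdot\alpha(2)$.

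The substance is the converse. Suppose $\calA$ returns $\Phi$ with $\ell_\infty^\alpha(\Phi)\le\beta\alpha(2)$; then \emph{every} agent $i$ occupies a position $p$ with $\alpha(p)\le\beta\alpha(2)$. Here I would invoke the only structural property of normal IPSFs that is needed: since $\alpha^m$ is strictly increasing, integer-valued, and $\alpha(1)=0$, we have $\alpha(p)\ge p-1$ for all $p$. Thus a position $p>\|A_k\|+1$ would give $\alpha(p)\ge\alpha(\|A_k\|+2)\ge\|A_k\|+1>\beta\alpha(2)$, provided $\|A_k\|\ge\beta\alpha(2)$, which the construction guarantees. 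By contraposition, $\alpha(p)\le\beta\alpha(2)$ forces $p\le\|A_k\|+1$, so each agent $i$ is matched within its top $\|A_k\|+1$ positions, i.e., to an alternative of $\{a_j\}\cup A_k$, with both $A_j$ and $A_k$ corresponding to sets that contain $i$. Since the assignment uses at most $K$ alternatives, these touch at most $K$ distinct blocks $A_{j'}$, and the corresponding sets $F_{j'}$ cover every agent (edge); this is a \textsc{Vertex-Cover} of size at most $K$, so $I$ is a \emph{yes}-instance. Hence running $\calA$ and testing $\ell_\infty^\alpha(\Phi)\le\beta\alpha(2)$ decides \textsc{Vertex-Cover} in polynomial time, forcing $\p=\np$.

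I expect the only delicate point to be the gap inequality, namely verifying that the wrong-set alternatives are pushed far enough down that a worst-case dissatisfaction of $\beta\alpha(2)$ cannot reach them; this is exactly what $\alpha(p)\ge p-1$ together with $\|A_k\|\ge\beta\alpha(2)$ delivers. In fact, because the $\ell_\infty$ bound applies uniformly to every agent, the argument is cleaner than in the $\ell_1$ case, and — mirroring the remark in the proof of Theorem~\ref{theorem:noApprox2} — it even suffices to take each block of size $\|A_j\|=\lceil\beta\alpha(2)\rceil$ rather than $\alpha(2)\beta n$, since no summation over the $n$ agents is involved.
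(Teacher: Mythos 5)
Your proposal is correct and follows exactly the paper's route: the paper proves this theorem by observing that the Theorem~\ref{theorem:noApprox3} reduction carries over verbatim with the $\ell_\infty$ threshold $\beta\cdot\alpha(2)$, noting (as you do) that blocks of size $\alpha(2)\cdot\beta$ already suffice since no summation over agents is needed. You merely spell out the gap argument (via $\alpha(p)\ge p-1$) that the paper leaves implicit, which is a faithful elaboration rather than a different approach.
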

\begin{proof}
  The proof of Theorem~\ref{theorem:noApprox3} works correctly in this
  case as well. In fact, it even suffices to take the $m$ groups of
  alternatives, $A_1, \ldots, A_m$, to contain $\alpha(2) \cdot \beta$
  alternatives each.
\end{proof}

The above results show that approximating the minimal dissatisfaction
of agents is difficult. On the other hand, if we focus on agents'
total satisfaction then constant-factor approximation exist in many
cases (see, e.g., the work of Lu and
Boutilier~\cite{budgetSocialChoice} and the next section). Yet, if we
focus on the satisfaction of the least satisfied voter, there are no
efficient constant-factor approximation algorithms for Monroe's and
Courant and Chamberlin's systems. (However, note that our result for
the Monroe setting is more general than the result for the
Chamberlin-Courant setting; the latter is for the Borda DPSF only.)

\begin{theorem}\label{theorem:noApprox5}
  For each normal DPSF $\alpha$ (where each entry is polynomially
  bounded in the number of alternatives) and each constant factor
  $\beta$, $0 < \beta \leq 1$, there is no $\beta$-approximation
  algorithm for $\alpha$-\textsc{Minmax-Monroe-Assignment-Dec} unless
  $\p = \np$.
\end{theorem}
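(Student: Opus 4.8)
The plan is to adapt the reduction from \textsc{X3C} used for Theorem~\ref{theorem:noApprox1}, recalibrated for the satisfaction (\textsc{Dec}) objective under the egalitarian criterion $\min^\alpha$. Fix a normal DPSF $\alpha$ with polynomially bounded entries and a constant $\beta$, $0 < \beta \le 1$, and assume toward a contradiction that $\calA$ is a polynomial-time $\beta$-approximation algorithm for $\alpha$-\textsc{Minmax-Monroe-Assignment-Dec}. Given an \textsc{X3C} instance $(U,\calF,K)$ with $U=[n]$, $|F_i|=3$, $K=\tfrac{n}{3}$, and every element in at most three sets, I would build a Monroe instance with agents $N=U$, set-alternatives $A_1=\{a_1,\dots,a_m\}$ (one per $F_i$), an auxiliary block $A_2$, uniform capacity $\tfrac{n}{K}=3$, and budget $K$. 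Exactly as in Theorem~\ref{theorem:noApprox1}, each agent $j$ ranks the $\le 3$ alternatives $a_i$ with $j\in F_i$ at the very top, so that a genuine exact cover lets every agent be served from one of its top three positions.

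The gap I would aim to prove is: on a \emph{yes}-instance $\OPT \ge \alpha(3)$ (assign each agent to the cover set containing it, placing every agent among its top three positions), whereas on a \emph{no}-instance every feasible assignment leaves some agent at a position $p$ with $\alpha(p) < \beta\,\alpha(3)$, so $\min^\alpha(\Phi) < \beta\,\alpha(3)$ for all $\Phi$. Running $\calA$ and testing whether the returned value is $\ge \beta\,\alpha(3)$ then decides \textsc{X3C}: on \emph{yes}-instances $\calA$ must return $\min^\alpha(\Phi)\ge \beta\OPT \ge \beta\,\alpha(3)$, while on \emph{no}-instances no assignment reaches $\beta\,\alpha(3)$. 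The combinatorial backbone of the \emph{no}-direction is the same as before: with capacity $3$ and at most $K=\tfrac{n}{3}$ opened alternatives, every opened alternative is filled with exactly three agents, and $K$ triples cover all $n$ agents only when pairwise disjoint, i.e. only for an exact cover. Thus in the \emph{no}-case either an auxiliary alternative is opened (and I want its forced third occupant to land deep) or some agent is routed to a set-alternative not containing it (and I want that slot to be deep as well).

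The hard part, and the reason this needs a genuinely different calibration than Theorems~\ref{theorem:noApprox1}--\ref{theorem:noApprox4}, is precisely the placement of $A_2$ and of the ``wrong'' set-alternatives so as to force that stranded voter \emph{deep}. For the dissatisfaction (\textsc{Inc}) objective a misassigned voter is already heavily penalized at moderate depth, since an IPSF grows with position; pushing a voter only $\Theta(n\,\alpha(3)\,\beta)$ positions down sufficed. Here the situation is reversed: a DPSF is large at the top and small only near the bottom, so to drive the least satisfied voter's score below $\beta\,\alpha(3)$ I must force that voter to a position $p$ with $\alpha(p)<\beta\,\alpha(3)$, which for slowly decaying families such as Borda means a position a constant fraction of the way to $m$. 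The delicate requirement is to arrange the auxiliary block and the bottom segments so that simultaneously (i) no auxiliary alternative is ranked high by more than two agents --- so that opening it strands its third occupant deep --- and (ii) the total number of auxiliary alternatives, hence $m$ itself, stays polynomial. These two demands pull against each other, and reconciling them is the technical heart of the argument.

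This is exactly where I expect the main obstacle to lie, and where the hypothesis that $\alpha$ has polynomially bounded entries does its work: it is what should guarantee that a polynomially large auxiliary block already realizes a position whose $\alpha$-value is below a $\beta$-fraction of $\alpha(3)$, keeping $I_M$ of polynomial size and polynomial-time computable. I would therefore spend most of the effort on the counting that balances ``each auxiliary alternative high for at most two agents'' against ``the stranded occupant is deep enough,'' and on confirming that the polynomial bound on $\alpha$ is sufficient to make both hold at once; once that calibration is in place, the \emph{yes}/\emph{no} equivalence and the resulting collapse $\p=\np$ follow exactly as in the proof of Theorem~\ref{theorem:noApprox1}.
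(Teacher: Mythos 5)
Your diagnosis of where the difficulty lies is exactly right, but the resolution you propose cannot work, and the obstruction is not a matter of finishing the counting --- the counting is provably infeasible. Take $\alpha = \alpha_{\bordadec}$ (Borda, which the theorem must cover) and write $m^* = m + \|A_2\|$ for the total number of alternatives in your instance. Satisfaction at position $p$ is $m^*-p$, so the positions at which satisfaction is at least $\beta\,\alpha(3) = \beta(m^*-3)$ are exactly the positions $p \leq (1-\beta)m^* + 3\beta$; call these \emph{high} positions --- roughly $(1-\beta)m^*$ of them in every agent's order. Your no-direction needs (i) every alternative of $M_l(j)$ to sit below agent $j$'s high region, and (ii) every auxiliary alternative to sit in the high region of at most two agents (capacity three minus one). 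Then every high position of agent $j$, other than the at most $3$ occupied by $M_f(j)$, must hold an auxiliary alternative, so counting agent--alternative incidences gives $2\|A_2\| \geq n\bigl((1-\beta)(m+\|A_2\|) + 3\beta - 4\bigr)$, i.e.\ $\|A_2\|\bigl(2 - n(1-\beta)\bigr) \geq n\bigl((1-\beta)m + 3\beta - 4\bigr)$. As soon as $n > 2/(1-\beta)$ and $m > 4/(1-\beta)$ (all but constantly many \textsc{X3C} instances), the left-hand side is negative and the right-hand side positive, so no choice of $\|A_2\|$ --- polynomial or otherwise --- satisfies (i) and (ii) simultaneously. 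This is precisely the feedback loop you sensed: enlarging $A_2$ lengthens every high region proportionally, while each added auxiliary alternative covers only two of the newly created high slots. The polynomial bound on $\alpha$'s entries cannot rescue this scheme.

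The paper escapes by changing the \emph{mechanism} that keeps real agents off the dummy alternatives: capacity instead of position. It adds a second group $N_2$ of $n' = 3m'$ dummy \emph{agents}, where $m' = \alpha_{dec}^{m+1}(1)\lceil\frac{1-\beta}{\beta}\rceil + 3$ is the number of dummy alternatives; each dummy alternative is the top choice of exactly three agents of $N_2$, and every agent of $N_2$ ranks all of $A_1$ below position $m'$. Real agents rank all of $A_2$ immediately after $M_f(j)$ --- inside their high region, which is now harmless: the $m'$ dummy alternatives can absorb at most $3m' = \|N_2\|$ agents, so if a real agent took a dummy slot, the pigeonhole principle would push some $N_2$ agent into $A_1$, i.e.\ below position $m'$. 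The threshold argument then shows $\beta\,\alpha_{dec}^{m+m'}(3) \geq \alpha_{dec}^{m+m'}(m')$, and here normality is essential: $\alpha_{dec}^{m+m'}(m') = \alpha_{dec}^{m+1}(1)$, so the satisfaction value at the threshold position does not grow as dummies are added (this is what kills the feedback loop), and the polynomial bound on $\alpha_{dec}^{m+1}(1)$ keeps $m'$, hence the whole instance, polynomial. So the hypothesis on $\alpha$ is used exactly where you expected, but only after the structural switch from position-based blocking (scattering) to capacity-based blocking with dummy agents; without that switch the reduction cannot be calibrated at all.
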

\begin{proof}
  Let us fix a DPSF $\alpha$ where each entry is polynomially bounded
  in the number of alternatives. For the sake of contradiction, let us
  assume that for some $\beta$, $0 < \beta \leq 1$, there is a
  $\beta$-approximation algorithm $\mathcal{A}$ for
  $\alpha$-\textsc{Minmax-Monroe-Assignment-Dec}. We will show that
  the existence of this algorithm implies that \textsc{X3C} is
  solvable in polynomial time.

  Let $I$ be an \textsc{X3C} instance with ground set $U = \{1, 2,
  \dots, n\}$ and a collection $\calF = \{F_{1}, \dots, F_{m}\}$ of
  subsets of $U$. Each set in $\calF$ has cardinality three. Further,
  w.l.o.g., we can assume that $n$ is divisible by three and that each
  $i \in U$ appears in at most three sets from $\calF$.  Given $I$, we
  form an instance $I_M$ of
  $\alpha$-\textsc{Minmax-Monroe-Assignment-Dec} as follows.  Let $n'
  = 3 \cdot (\alpha_{dec}^{m+1}(1) \cdot \lceil \frac{1 - \beta}{\beta}\rceil +
  3)$.  The set $N$ of agents is partitioned into two subsets, $N_1$
  and $N_2$. $N_1$ contains $n$ agents (intuitively, corresponding to
  the elements of the ground set $U$) and $N_2$ contains $n'$ agents
  (used to enforce certain properties of the solution).  The set of
  alternatives $A$ is partitioned into two subsets, $A_1$ and $A_2$.
  We set $A_1 = \{a_1, \ldots, a_m\}$ (members of $A_1$ correspond to
  the sets in $\calF$), and we set $A_2 = \{b_1, \ldots, b_{m'}\}$,
  where $m' = \frac{n'}{3}$.

  For each $j$, $1 \leq j \leq n$, we set $M_f(j) = \{ a_i \mid j \in
  F_i\}$.  For each $j$, $1 \leq j \leq n$, we set the preference
  order of the $j$'th agent in $N_1$ to be of the form 
  \[ 
     M_f(j) \succ A_2 \succ A_1 - M_f(j).\] 
  Note that by our assumptions, $\|M_f(j)\| \leq 3$.
  For each $j$, $1 \leq j \leq n'$, we set the preference order of the
  $j$'th agent in $N_2$ to be of the form
  \[ 
     b_{\left\lceil\frac{j}{3}\right\rceil} \succ A_2 -
     \{b_{\left\lceil\frac{j}{3}\right\rceil}\} \succ A_1.
  \]
  Note that each agent in $N_2$ ranks the alternatives from $A_1$ in
  positions positions $m'+1, \ldots, m'+m$.  Finally, we set the
  budget/number of candidates that can be selected, to be $K =
  \frac{n+n'}{3}$.

  Now, consider the solution $\Phi$ returned by $\mathcal{A}$ on
  $I_{M}$. We will show that $\ell_{\infty}^{\alpha_{dec}^{m +
      m'}}(\Phi) \leq$ $\beta\alpha_{dec}^{m + m'}(3)$ if and only
  if $I$ is a \emph{yes}-instance of \textsc{X3C}.

  ($\Leftarrow$) If there exists an exact set cover of $U$ with sets
  from $\calF$, then it is easy to construct a solution for $I_M$
  where the satisfaction of each agent is greater or equal to
  $\beta\cdot\alpha_{dec}^{m + m'}(3)$. Let $I \subseteq \{1, \ldots, m\}$
  be a set such that $\bigcup_{i \in I}F_i = U$ and $\|I\| =
  \frac{n}{3}$. We assign each agent $j$ from $N_1$ to the alternative
  $a_i$ such that (a) $i \in I$ and (b) $j \in F_i$, and we assign
  each agent from $N_2$ to his or her most preferred alternative.
  Thus, Algorithm $\calA$ has to return an assignment with the minimal
  satisfaction greater or equal to $\beta\cdot\alpha_{dec}^{m + m'}(3)$.

  ($\Rightarrow$) For the other direction, we first show that
  $\beta\cdot\alpha_{dec}^{m + m'}(3) \geq \alpha_{dec}^{m + m'}(m')$.
  Since DPSFs are strictly decreasing, it holds that:
  \begin{equation}
    \label{eq:1}
    \beta\cdot\alpha_{dec}^{m + m'}(3) \geq \beta\cdot(\alpha_{dec}^{m + m'}(m') + m' - 3).
  \end{equation}
  Then, by the definition of DPSFs, it holds that:
  \begin{equation}
    \label{eq:2}
    \alpha_{dec}^{m + m'}(m') = \alpha_{dec}^{m + 1}(1)
  \end{equation}
  Using the fact that $m' = (\alpha_{dec}^{m+1}(1) \cdot \lceil
  \frac{1 - \beta}{\beta}\rceil + 3)$ and using~\eqref{eq:2},
  we can transform inequality~\eqref{eq:1} to obtain the following:
  \begin{align*}
    \beta\cdot\alpha_{dec}^{m + m'}(3) &\geq \beta\cdot(\alpha_{dec}^{m + m'}(m') + m' - 3) \\
       & = \beta\cdot\left(\alpha_{dec}^{m + m'}(m') + (\alpha_{dec}^{m+1}(1) \cdot \left\lceil\frac{1 - \beta}{\beta}\right\rceil + 3) - 3\right)\\
       & \geq \beta\cdot\alpha_{dec}^{m + m'}(m') + (1-\beta)\cdot \alpha_{dec}^{m+1}(1) \\
       &  = \beta\cdot\alpha_{dec}^{m + m'}(m') + (1-\beta)\cdot \alpha_{dec}^{m + m'}(m') = \alpha_{dec}^{m + m'}(m').
  \end{align*}

  This means that if the minimal satisfaction of an agent is at least
  $\beta\cdot\alpha_{dec}^{m + m'}(3)$, then no agent was assigned to
  an alternative that he or she ranked beyond position $m'$.  If some
  agent $j$ from $N_{1}$ were assigned to an alternative from $A_{2}$,
  then, by the pigeonhole principle, some agent from $N_{2}$ were
  assigned to an alternative from $A_{1}$. However, each agent in
  $N_2$ ranks the alternatives from $A_1$ beyond position $m'$ and
  thus such an assignment is impossible.  In consequence, it must be
  that each agent in $j$ was assigned to an alternative that
  corresponds to a set $F_{i}$ in $\calF$ that contains $j$. Such an
  assignment directly leads to a solution for $I$.
\end{proof}

Let us now move on to the case of \textsc{Minmax-CC-Assignment-Dec}
family of problems. Unfortunately, in this case our inapproximability
argument holds for the case of Borda DPSF only (though we believe that
it can be adapted to other DPSFs as well). Further, in our previous
theorems we were showing that existence of a respective
constant-factor approximation algorithm implies that $\np$ collapses
to $\p$. In the following theorem we will show a seemingly weaker
collapse of $\wtwo$ to $\fpt$.

Intuitively, $\fpt$ is a class of problems that can be solved in time
$f(k)n^{O(1)}$, where $n$ is the size of the input instance, $k$ is a
so-called parameter (some quantity, typically characterizing the
difficulty of the instance), and $f$ is some computable function.  For
example, for the \textsc{Set-Cover} and \textsc{Vertex-Cover} problems
one often takes $K$ as the value of the parameter.  In the world of
parametrized complexity, $\fpt$ is viewed as the class of \emph{easy}
problems (analogous to the class $\p$), whereas classes $\wone
\subseteq \wtwo \subseteq \cdots$ are believed to form a hierarchy of
classes of \emph{hard} problems (somewhat analogous to the class
$\np$).  It holds that $\fpt \subseteq \wone$, but it seems unlikely
that $\fpt = \wone$, let alone $\fpt = \wtwo$.  We point the reader to
the books of Niedermeier~\cite{nie:b:invitation-fpt} and Flum and
Grohe~\cite{flu-gro:b:parameterized-complexity} for detailed overviews
of parametrized complexity theory.  Interestingly, while both
\textsc{Set-Cover} and \textsc{Vertex-Cover} are $\np$-complete, the
former is $\wtwo$-complete and the latter belongs to $\fpt$ (see,
e.g., the book of Niedermeier~\cite{nie:b:invitation-fpt} for these
now-standard results and their history).

To prove hardness of approximation for
$\alpha_\bordadec$-\textsc{Minmax-CC-Assignment-Dec}, we first prove
the following simple lemma.

\begin{lemma}\label{lemma:coveringSubsets}
  Let $K, p, l$ be three positive integers and let $X$ be a set of
  cardinality $lpK$. There exists a family $\calS = \{S_1, \ldots,
  S_{\binom{lK}{K}} \}$ of $pK$-element subsets of $X$ such that for
  each $K$-element subset $B$ of $X$, there is a set $S_i \in \calS$
  such that $B \subseteq S_i$.
\end{lemma}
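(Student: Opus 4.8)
The plan is to give an explicit construction via a block partition of $X$, turning the statement into a standard covering-design observation. Since $\|X\| = lpK = (lK)\cdot p$, I would first partition $X$ into $lK$ pairwise disjoint blocks $X_1, \ldots, X_{lK}$, each of size exactly $p$. For every $K$-element subset $I \subseteq [lK]$ of block indices, I define $S_I = \bigcup_{i \in I} X_i$; since $I$ selects $K$ blocks of size $p$ each, we have $\|S_I\| = pK$, as required. Taking $\calS$ to be the collection of all such $S_I$ yields the family. Because $p \geq 1$, the blocks are nonempty and pairwise disjoint, so distinct index sets give distinct unions; hence $\calS$ has exactly $\binom{lK}{K}$ members, matching the prescribed cardinality.

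The covering property is then immediate. Given any $K$-element subset $B \subseteq X$, each of its $K$ elements lies in exactly one block, so the set $J = \{ i : B \cap X_i \neq \emptyset\}$ of blocks met by $B$ satisfies $\|J\| \leq K$. Extending $J$ to any $K$-element index set $I \supseteq J$ (there is room since $lK \geq K$), we obtain $B \subseteq \bigcup_{i \in I} X_i = S_I \in \calS$. Hence every $K$-subset of $X$ is contained in some member of $\calS$.

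I do not expect a genuine obstacle here: the only thing to get right is the block size, and the identity $\|X\|/(lK) = p$ ensures that selecting $K$ of the $lK$ blocks simultaneously produces sets of the correct size $pK$ and gives precisely $\binom{lK}{K}$ of them, so all three numerical constraints line up at once. The one point worth stating carefully is that a $K$-element subset can meet at most $K$ blocks, which is exactly what guarantees that its block support fits inside some chosen $K$-tuple of blocks and thus inside the corresponding $S_I$.
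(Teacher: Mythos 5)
Your proof is correct and is essentially the paper's own construction seen from the opposite direction: the paper starts from $X' = [lK]$, takes all $\binom{lK}{K}$ of its $K$-element subsets, and then replaces each element of $X'$ by $p$ new elements, which is exactly your partition of $X$ into $lK$ blocks of size $p$ with $S_I$ the union of the $K$ blocks indexed by $I$. If anything, your write-up is more complete, since you explicitly verify the covering step (a $K$-element set meets at most $K$ blocks, so its block support extends to a $K$-element index set), which the paper leaves implicit.
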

\begin{proof}
  Set $X' = [lK]$ and let $Y'$ be a family of all $K$-element subsets
  of $X'$. Replace each element $i$ of $X'$ with $p$ new elements (at
  the same time replacing $i$ with the same $p$ elements within each
  set in $Y'$ that contains $i$). As a result we obtain two new sets,
  $X$ and $Y$, that satisfy the statement of the theorem (up to the
  renaming of the elements).
\end{proof}

\begin{theorem}\label{theorem:noApprox6}
  Let $\alpha^{m}_{\bordadec}$ be the Borda DPSF
  ($\alpha^{m}_{\bordadec}(i) = m - i$). For each constant factor
  $\beta$, $0 < \beta \leq 1$, there is no $\beta$-approximation
  algorithm for
  $\alpha^{m}_{\bordadec}$-\textsc{Minmax-CC-Assignment-Dec} unless
  $\fpt = \wtwo$.
\end{theorem}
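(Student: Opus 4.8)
The plan is to give an FPT reduction from \textsc{Set-Cover}, which is $\wtwo$-complete when parametrized by the solution size $K$, so that a polynomial-time $\beta$-approximation algorithm $\calA$ for $\alpha^{m}_{\bordadec}$-\textsc{Minmax-CC-Assignment-Dec} could be used to decide \textsc{Set-Cover} in FPT time. Given an instance $(U,\calF,K)$ with $|U|=n$ and $\calF=\{F_1,\dots,F_m\}$, I would let the real agents correspond to the elements of $U$ and introduce one set-alternative $a_j$ per $F_j$; the preference order of the agent for element $e$ ranks at the top the (at most a constant number of) alternatives $a_j$ with $e\in F_j$. If a size-$K$ cover exists, assigning each real agent to a selected set containing it puts every real agent near the top of its list, whereas if no cover exists, some real agent is forced far down. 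The whole difficulty is that, unlike in the Monroe setting of Theorem~\ref{theorem:noApprox5}, Chamberlin--Courant imposes no capacity bound, so a single cleverly placed alternative could represent many agents at once, and the reduction must rule out that the approximation algorithm ``rescues'' an otherwise uncovered agent.

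To control this I would use Lemma~\ref{lemma:coveringSubsets}. Introducing a block $X$ of $lpK$ dummy alternatives and, for each set $S_i$ in the guaranteed family $\calS$, a dummy agent that ranks the $pK$ elements of $S_i$ at the very top, I force any assignment that keeps every dummy agent satisfied to select dummy alternatives hitting every $S_i$; by the covering property of $\calS$, such a hitting set must spend a large, prescribed portion of the budget inside $X$, which both pins down the structure of an optimal solution and guarantees that whatever dummies end up selected sit low in the lists of the real agents. The parameters $l$ (a constant depending on $\beta$) and $p$ are then chosen so that, with $M$ the total number of alternatives and Borda scores $\alpha^{M}_{\bordadec}(i)=M-i$, an uncovered real agent is pushed past position $(1-\beta)M$; since the Borda score is affine in the position, this makes the least satisfaction of any feasible assignment on a \emph{no}-instance strictly below $\beta$ times the least satisfaction achievable on a \emph{yes}-instance.

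With the gap in place the argument closes as in the earlier theorems: on a \emph{yes}-instance $\calA$ must return an assignment whose minimum satisfaction is at least $\beta\cdot\OPT$, which by the gap exceeds what is attainable on any \emph{no}-instance, so the value returned by $\calA$ decides \textsc{Set-Cover}. Crucially, the number of dummy agents is $\binom{lK}{K}$, which for constant $l$ is bounded by a function of $K$ alone, and the selected budget is $O(lK)$; hence the construction is an FPT reduction and running the polynomial-time $\calA$ on the produced instance takes time $f(K)\cdot\mathrm{poly}(n)$, placing \textsc{Set-Cover} in $\fpt$ and forcing $\fpt=\wtwo$. I expect the main obstacle to be exactly the no-capacity issue: verifying that the covering family from Lemma~\ref{lemma:coveringSubsets} really forces every budget-respecting, all-agents-satisfied selection to encode a genuine cover, and that the dummies it leaves available to an uncovered agent are all ranked low enough to realize the required multiplicative gap. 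This is also the reason the result is stated only for the Borda DPSF, where the affine position-to-score relation makes the gap computation transparent.
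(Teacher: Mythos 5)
You correctly identify the crux (no capacity bound, so selected dummy alternatives can ``rescue'' an uncovered agent) and the right tool (Lemma~\ref{lemma:coveringSubsets}), but you apply the lemma in the wrong direction, and the resulting mechanism cannot create the gap. In the paper's proof the family $\calS$ is used to build $n' = \binom{\lceil 2/\beta\rceil K}{K}$ \emph{copies of each element-agent}: the $j$'th copy of the agent for element $i$ has preference order $M_f(i) \succ r_j \succ A_1 - M_f(i)$, where $r_1, \ldots, r_{n'}$ are orderings of the dummy block such that every $K$-element subset of dummies lies in the bottom $pK$ positions of some $r_j$. This is exactly what kills the rescue: whichever (at most $K$) dummy alternatives the solution selects, \emph{some} copy of the uncovered element ranks all of them beyond position $m' - m'\beta + m\beta$, so the minimum satisfaction falls below $\beta(m'-m)$. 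With only one agent per element, as in your construction, that agent has a single fixed order over the dummies, and a feasible solution can simply include the dummy this agent ranks first among dummies---sitting just below its constant-size block $M_f$---giving it near-maximal satisfaction and destroying any yes/no gap.

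Your substitute mechanism---dummy \emph{agents}, one per $S_i \in \calS$, ranking the $pK$ elements of $S_i$ on top so as to ``force any all-satisfied assignment to hit every $S_i$''---is provably unworkable. The sets $S_i$ are blow-ups of \emph{all} $K$-element subsets of $[lK]$, so a set $H$ of alternatives hits every $S_i$ only if its projection to $[lK]$ meets every $K$-subset, i.e., only if $\|H\| \geq (l-1)K + 1$, which exceeds the budget $K$ whenever $l \geq 2$ (and $l = \lceil 2/\beta\rceil \geq 2$). Hence either your dummy agents cannot all be given top-block alternatives even on \emph{yes}-instances, so the forward direction of the gap argument collapses (the optimum on \emph{yes}-instances is dragged down by an unhappy dummy agent), or---if dummy agents can also be satisfied by set-alternatives ranked just after $S_i$---the hitting constraint never binds, the selected dummies are unconstrained, and an uncovered real agent is rescued as above. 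Either way the value returned by $\calA$ does not distinguish \emph{yes}- from \emph{no}-instances. The repair is precisely the paper's construction: use the lemma to diversify the element-agents' orderings of the dummy alternatives, taking $f(K)$-many copies of each element-agent (which keeps the reduction FPT), rather than to constrain the selection through dummy agents.
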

\begin{proof}
  For the sake of contradiction, let us assume that there is some
  constant $\beta$, $0 < \beta \leq 1$, and a polynomial-time
  $\beta$-approximation algorithm $\calA$ for
  $\alpha^{m}_{\bordadec}$-\textsc{Minmax-CC-Assignment-Dec}. We will
  show that the existence of this algorithm implies that
  \textsc{Set-Cover} is fixed-parameter tractable for the parameter
  $K$ (since \textsc{Set-Cover} is known to be $\wtwo$-complete for
  this parameter, this will imply $\fpt=\wtwo$).

  Let $I$ be an instance of \textsc{Set-Cover} with ground set $U =
  [n]$ and family $\calF = \{F_{1}, F_{2}, \dots, F_{m}\}$ of subsets
  of $U$.  Given $I$, we build an instance $I_{CC}$ of
  $\alpha^{m}_{\bordadec}$-\textsc{Minmax-CC-Assignment-Dec} as
  follows. The set of agents $N$ consists of $n$ subsets of agents,
  $N_1, \ldots, N_n$, where each group $N_i$ contains exactly $n' =
  \binom{\left\lceil \frac{2}{\beta} \right\rceil K}{K}$ agents.
  Intuitively, for each $i$, $1 \leq i \leq n$, the agents in the set
  $N_{i}$ correspond to the element $i$ in $U$.  The set of
  alternatives $A$ is partitioned into two subsets, $A_1$ and $A_2$,
  such that $A_1 = \{a_1, \ldots, a_m\}$ is a set of alternatives
  corresponding to the sets from the family $\calF$, and $A_2$,
  $\|A_2\| = \left\lceil \frac{2}{\beta}\right\rceil \left\lceil
    \frac{m(1 + \beta)}{K} \right\rceil K$, is a set of dummy
  alternatives needed for our construction.  We set $m' = \|A\| = m +
  \|A_2\|$.

  Before we describe the preference orders of the agents in $N$, we
  form a family $R = \{r_1, \ldots, r_{n'}\}$ of preference orders
  over $A_2$ that satisfies the following condition: For each
  $K$-element subset $B$ of $A_2$, there exists $r_j$ in $R$ such that
  all members of $B$ are ranked among the bottom $\left\lceil \frac{m(1 +
    \beta)}{K} \right\rceil K$ positions in $r_j$. By
  Lemma~\ref{lemma:coveringSubsets}, such a construction is possible
  (it suffices to take $l = \left\lceil \frac{2}{\beta}\right\rceil$ and $p =
  \left\lceil \frac{m(1 + \beta)}{K} \right\rceil$); further, the proof of the
  lemma provides an algorithmic way to construct $R$.

  We form the preference orders of the agents as follows.  For each
  $i$, $1 \leq i \leq n$, set $M_f(i) = \{ a_t \mid i \in F_t\}$. For
  each $i$, $1 \leq i \leq n$, and each $j$, $1 \leq j \leq n'$, the
  $j$'th agent from $N_i$ has preference order of the form:
  \[
     M_f(i) \succ r_j \succ A_1 - M_f(i)
  \]
  (we pick any arbitrary, polynomial-time computable order of
  candidates within $M_f(i)$ and $M_l(i)$).

  Let $\Phi$ be an assignment computed by $\calA$ on $I_{M}$.  We will
  show that $\ell_{\infty}^{\alpha^{m'}_{\bordadec}}(\Phi) \geq
  \beta\cdot(m' - m)$ if and only if $I$ is a \emph{yes}-instance of
  \textsc{Set-Cover}.

  ($\Leftarrow$) If there exists a solution for $I$ (i.e., a cover of
  $U$ with $K$ sets from $\calF$), then we can easily show an
  assignment where each agent is assigned to an alternative that he or
  she ranks among the top $m$ positions (namely, for each $j$, $1 \leq
  j \leq n$, we assign all the agents from the set $N_j$ to the
  alternative $a_i \in A_1$ such that $j \in F_i$ and $F_i$ belongs to
  the alleged $K$-element cover of $U$). Under this assignment, the
  least satisfied agent's satisfaction is at least $m'-m$ and, thus,
  $\calA$ has to return an assignment $\Phi$ where
  $\ell_{\infty}^{\alpha^{m'}_{\bordadec}}(\Phi) \geq \beta\cdot(m' -
  m)$.

  ($\Rightarrow$) Let us now consider the opposite direction. We
  assume that $\calA$ found an assignment $\Phi$ such that
  $\ell_{\infty}^{\alpha^{m}_{\bordadec}}(\Phi) \geq \beta\cdot(m' -
  m)$ and we will show that $I$ is a \emph{yes}-instance of
  \textsc{Set-Cover}.  We claim that for each $i$, $1 \leq i \leq n $,
  at least one agent $j$ in $N_i$ were assigned to an alternative from
  $A_{1}$. If all the agents in $N_i$ were assigned to alternatives
  from $A_2$, then, by the construction of $R$, at least one of them
  would have been assigned to an alternative that he or she ranks at a
  position greater than $\|A_2\| - \left\lceil \frac{m(1 + \beta)}{K}\right\rceil
  K = \left\lceil \frac{2}{\beta}\right\rceil \left\lceil \frac{m(1 + \beta)}{K}
  \right\rceil K - \left\lceil \frac{m(1 + \beta)}{K}\right\rceil K$.  Since for $x =
  \left\lceil \frac{m(1 + \beta)}{K} \right\rceil K$ we have:
  \begin{align*}
    \left\lceil \frac{2}{\beta}\right\rceil x - x \geq m' - m'\beta +
    m\beta
  \end{align*}
  (we skip the straightforward calculation).
  This means that this agent would have been assigned to an
  alternative that he or she ranks at a position greater than $m' -
  m'\beta + m\beta$.  As a consequence, this agent's satisfaction
  would be lower than $(m' - m)\beta$.
  Similarly, no agent from $N_{i}$ can be assigned to an alternative
  from $M_l(i)$. Thus, for each $i$, $1 \leq i \leq n$, there exists
  at least one agent $j \in N_{i}$ that is assigned to an alternative
  from $M_f(i)$.  In consequence, the covering subfamily of $\calF$
  consists simply of those sets $F_k$, for which some agent is
  assigned to alternative $a_k \in A_1$.

  The presented construction gives the exact algorithm for
  \textsc{Set-Cover} problem running in time $f(K)(n+m)^{O(1)}$, where
  $f(K)$ is polynomial in $\binom{\left\lceil \frac{2}{\beta}
    \right\rceil}{K}$. The existence of such an algorithm means that
  \textsc{Set-Cover} is in $\fpt$. On the other hand, we know that
  \textsc{Set-Cover} is $\wtwo$-complete, and thus if $\calA$ existed
  then $\fpt = \wtwo$ would hold.
\end{proof}


\section{Approximation Algorithms}
\label{sec:algorithms}

We now turn to approximation algorithms for Monroe's and Chamberlin
and Courant's multiwinner voting rules (both of which are special
cases of our resource allocation problem). Indeed, if one focuses on
agents' total satisfaction then it is possible to obtain high-quality
approximation results.  In particular, we show the first nontrivial
(randomized) approximation algorithm for
$\alpha_{\bordadec}$-\textsc{Monroe-Assignment-Dec} (for each
$\epsilon > 0$, we can provide a randomized polynomial-time algorithm
that achieves $0.715 - \epsilon$ approximation ratio), and the first
polynomial-time approximation scheme (PTAS) for
$\alpha_{\bordadec}$-\textsc{CC-Assignment-Dec}. These results stand
in a sharp contrast to those from the previous section, where we have
shown that approximation is hard for essentially all remaining
variants of the problem.

The core difficulty in solving $\alpha$-\textsc{Monroe/CC-Assignment}
problems lays in selecting the alternatives that should be assigned to
the agents. Given a preference profile and a set $A'$ of $K$
alternatives, using a standard network-flow argument, it is easy to
match them optimally to the agents.
\begin{proposition}[\textbf{Implicit in the paper of Betzler et
    al.~\cite{fullyProportionalRepr}}]\label{prop:assignment}
  Let $\alpha$ be a normal DPSF, $N$ be a set of agents, $A$ be a set
  of alternatives, $V$ be a preference profile of $N$ over $A$, and
  $A'$ a $K$-element subset of $A$ (where $K$ divides $\|N\|$). There
  is a polynomial-time algorithm that computes an optimal assignment
  $\Phi$ of the alternatives from $A'$ to the agents, both for
  $\ell_1^\alpha$ and for $\min^\alpha$, both for the case where each
  alternative in $A'$ should be assigned to the same number of agents
  (Monroe's case) and for the case without additional restrictions
  (Chamberlin and Courant's case).
\end{proposition}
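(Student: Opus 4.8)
The plan is to treat the proposition as four separate cases---each combination of $\{$Chamberlin--Courant, Monroe$\}$ with $\{\ell_1^\alpha, \min^\alpha\}$---and reduce each one to either a trivial greedy choice or a standard network-flow computation. The two Chamberlin--Courant cases are immediate. Since there is no bound on how many agents a single alternative may represent, each agent $i$ can be assigned, independently of all the others, to the alternative $a \in A'$ that maximizes $\alpha(\pos_i(a))$. This simultaneously maximizes every agent's individual satisfaction, so it maximizes both the sum $\ell_1^\alpha(\Phi) = \sum_i \alpha(\pos_i(\Phi(i)))$ and the minimum $\min^\alpha(\Phi)$. Computing, for each agent, the best of the $K$ alternatives in $A'$ runs in time $O(\|N\| \cdot K)$.

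For Monroe's case with $\ell_1^\alpha$, I would encode the equal-division requirement as a min-cost flow (transportation) instance. Build a bipartite network with a source $s$, one node per agent, one node per alternative in $A'$, and a sink $t$: an arc $s \to i$ of capacity $1$ for every agent $i$; an arc $i \to a$ of capacity $1$ for every agent $i$ and every $a \in A'$, carrying cost $-\alpha(\pos_i(a))$ (or a suitably shifted nonnegative cost); and an arc $a \to t$ of capacity $\frac{\|N\|}{K}$ for every $a \in A'$. The total capacity into $t$ is $K \cdot \frac{\|N\|}{K} = \|N\|$, so a feasible flow of value $\|N\|$ exists, saturates every source arc, and (since no alternative may exceed its capacity) routes exactly $\frac{\|N\|}{K}$ units into each alternative. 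By the integrality theorem for min-cost flow with integral capacities, a minimum-cost flow of value $\|N\|$ is integral and corresponds to a feasible Monroe assignment of minimum total cost, i.e.\ of maximum $\ell_1^\alpha$.

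For Monroe's case with $\min^\alpha$, I would combine the same capacitated bipartite structure with thresholding. The function $\alpha$ takes at most $m$ distinct values, so $\min^\alpha(\Phi)$ has at most $m$ candidate values. For a threshold $v$, delete every arc $i \to a$ with $\alpha(\pos_i(a)) < v$ and test, by a single max-flow computation, whether the resulting network still admits a flow of value $\|N\|$---equivalently, whether every agent can be matched under the equal-division constraint using only alternatives that give satisfaction at least $v$. Binary searching for the largest $v$ for which this feasibility test succeeds yields an optimal assignment for $\min^\alpha$, and integrality again guarantees that the resulting flow is a genuine assignment.

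Each step is individually routine; the one point that needs care is verifying that the ``at most $\frac{\|N\|}{K}$'' capacity arcs, together with the requirement to route all $\|N\|$ units, force exactly the equal division that Monroe's rule demands, and that integrality of min-cost (and max-) flow turns fractional optima into honest integral assignments. I expect this bookkeeping, rather than any deep idea, to be the main thing to get right.
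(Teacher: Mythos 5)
Your proposal is correct and matches the paper's intent: the paper gives no explicit proof, justifying the proposition only as ``a standard network-flow argument'' implicit in Betzler et al., and your min-cost-flow construction for Monroe (with the saturation argument forcing exact $\frac{\|N\|}{K}$-division) together with threshold-plus-max-flow for the egalitarian variant is precisely that standard argument, carried out carefully. Your observation that the Chamberlin--Courant cases need no flow at all---pointwise-optimal greedy assignment dominates every other assignment, hence optimizes both $\ell_1^\alpha$ and $\min^\alpha$ simultaneously---is a minor simplification over routing everything through flows, but it is the same approach in substance.
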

Thus, in the algorithms in this section, we will focus on the issue of
selecting the alternatives and not on the issue of matching them to
the agents.

\subsection{Monroe's System}\label{sec:monroe-approx}

We first consider $\alpha_\bordadec$-\textsc{Monroe-Assignment-Dec}.
%
%
%
%
%
%
%
%
Perhaps the most natural approach to solve this problem is to build a
solution iteratively: In each step we pick some not-yet-assigned
alternative $a_i$ (using some criterion) and assign him or her to
those $\lceil \frac{N}{K} \rceil$ agents that (a) are not assigned to
any other alternative yet, and (b) whose satisfaction of being matched
with $a_i$ is maximal. It turns out that this idea, implemented
formally in Algorithm~\ref{alg:greedy}, works very well in many
cases. We provide a lower bound on the total satisfaction it
guarantees in the next lemma.  (For each positive integer $k$, we let
$H_k = \sum_{i=1}^k\frac{1}{i}$ be the $k$'th harmonic number. Recall
that $H_k = \Theta(\log k)$.)

\SetKwInput{KwNotation}{Notation}
\begin{algorithm}[t]
   \small
   \SetAlCapFnt{\small}
   \KwNotation{$\Phi \leftarrow$ a map defining a partial assignment, iteratively built by the algorithm. \\
          $\hspace{21pt}$ $\Phi^{\leftarrow} \leftarrow$ the set of agents for which the assignment is 
          already defined. \\
          $\hspace{21pt}$ $\Phi^{\rightarrow} \leftarrow$ the set of alternatives already used in the 
          assignment.}
   \If{$K \leq 2$}{compute the optimal solution using an algorithm of Betzler et al.~\cite{fullyProportionalRepr} and return.} 
   $\Phi = \{\}$ \\
   \For{$i\leftarrow 1$ \KwTo $K$}{
      $score \leftarrow \{\}$ \\
      $bests \leftarrow \{\}$ \\
      \ForEach{$a_{i} \in A \setminus \Phi^{\rightarrow}$}{
          $agents \leftarrow$ sort $N \setminus \Phi^{\leftarrow}$ so that $j \prec k$ in $agents$ $\implies$ \\
                           $\hspace{37pt}$ $pos_{j}(a_{i}) \leq pos_{k}(a_{i})$ \\
          $bests[a_{i}] \leftarrow$ chose first $\lceil \frac{N}{K} \rceil$ elements from $agents$ \\
          $score[a_{i}] \leftarrow \sum_{j \in bests} (m - pos_{j}(a_{i}))$\\
      }
      $a_{best} \leftarrow \mathrm{argmax}_{a \in A \setminus \Phi^{\rightarrow}} score[a]$ \\
      \ForEach{$j \in bests[a_{best}]$}{
         $\Phi[j] \leftarrow a_{best}$ \\
      }
   }
   \caption{\small The algorithm for \textsc{Monroe-Assignment}.}
   \label{alg:greedy}
\end{algorithm}

\begin{lemma}\label{lemma:greedy}
  Algorithm~\ref{alg:greedy} is a polynomial-time $(1 -
  \frac{K-1}{2(m-1)} - \frac{H_K}{K})$-approximation algorithm for
  $\alpha_{\bordadec}$-\textsc{Monroe-Assignment-Dec}.
\end{lemma}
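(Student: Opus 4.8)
The plan is to prove an absolute lower bound on the total satisfaction returned by Algorithm~\ref{alg:greedy} and combine it with the trivial upper bound $\OPT \le n(m-1)$, which holds because each of the $n$ agents contributes satisfaction at most $\alpha_{\bordadec}(1) = m-1$. Since $n(m-1)\bigl(1 - \frac{K-1}{2(m-1)} - \frac{H_K}{K}\bigr) = n(m-1) - \frac{n(K-1)}{2} - \frac{n(m-1)H_K}{K}$, and since the assignment $\Phi$ produced by the algorithm satisfies (total satisfaction) $= n(m-1) - \sum_j(\pos_j(\Phi(j))-1)$, it suffices to bound the total \emph{loss} $\sum_j(\pos_j(\Phi(j))-1)$ from above by $\frac{n(K-1)}{2} + \frac{n(m-1)H_K}{K}$. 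The approximation guarantee then follows: the algorithm's satisfaction is at least $\OPT$ times the claimed ratio, via the chain (satisfaction) $\ge \text{ratio}\cdot n(m-1) \ge \text{ratio}\cdot\OPT$ when the ratio is nonnegative, the case of a negative ratio being vacuous. Throughout I assume $K \ge 3$, since the algorithm solves $K \le 2$ exactly using the procedure of Betzler et al.

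First I would analyze a single iteration $i \in \{1,\dots,K\}$. At its start, $(i-1)\frac{n}{K}$ agents are already assigned, so $|U_i| = \frac{n(K-i+1)}{K}$ agents remain unassigned and $m-i+1$ alternatives remain unused. The key step is a counting/pigeonhole argument: putting $s_i = \bigl\lceil \frac{m-i+1}{K-i+1}\bigr\rceil$, I count the pairs $(j,a)$ with $j$ an unassigned agent and $a$ among the $s_i$ highest-ranked \emph{unused} alternatives in $j$'s order. There are $|U_i|\,s_i$ such pairs distributed over $m-i+1$ unused alternatives, and the choice of $s_i$ makes the average number per alternative at least $\frac{n}{K}$; hence some unused alternative $a^{\ast}$ is ranked among the top $s_i$ unused alternatives by at least $\frac{n}{K}$ agents. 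Because at most $i-1$ alternatives have been used, each such agent ranks $a^{\ast}$ at a position at most $(s_i-1)+(i-1)+1 = s_i+i-1$. Assigning $\frac{n}{K}$ of these agents to $a^{\ast}$ is one feasible move, so the greedy choice $a_{best}$, which maximizes the step score, gains at least $\frac{n}{K}\bigl(m-(s_i+i-1)\bigr)$; equivalently, the loss incurred in iteration $i$ is at most $\frac{n}{K}(s_i+i-2)$.

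Finally I would sum the per-iteration losses. The contribution $\sum_{i=1}^{K}\frac{n}{K}(i-1) = \frac{n(K-1)}{2}$ reproduces the first target term exactly. For the second, I use $s_i \le \frac{m-i+1}{K-i+1}+1$ together with the reindexing $j = K-i+1$, which gives $\sum_{i=1}^{K}\frac{m-i+1}{K-i+1} = (m-K)H_K + K$, whence $\sum_{i=1}^{K}\frac{n}{K}s_i \le \frac{n(m-K)H_K}{K} + n$. A short calculation shows $\frac{n(m-K)H_K}{K} + n \le \frac{n(m-1)H_K}{K}$ exactly when $K \le (K-1)H_K$, which holds for every $K \ge 3$; this is precisely where the reduction to $K \ge 3$ (and the separate exact treatment of $K \le 2$) is needed. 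Adding the two bounds yields total loss at most $\frac{n(K-1)}{2} + \frac{n(m-1)H_K}{K}$, as required, and the polynomial running time is immediate from the $K$ iterations, each scanning the $m$ alternatives and sorting the agents. The main obstacle is the per-iteration pigeonhole bound: pinning down the worst-case position $s_i+i-1$ and then verifying that the harmonic sum, together with the slack from the ceiling in $s_i$, is absorbed when $(m-K)$ is replaced by $(m-1)$ in the target.
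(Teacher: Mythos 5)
Your proposal is correct and takes essentially the same route as the paper's own proof: the identical per-iteration pigeonhole argument (some unused alternative is ranked at position at most $s_i+i-1$ by at least $\frac{n}{K}$ unassigned agents, so greedy gains at least $\frac{n}{K}(m-s_i-i+1)$), the identical harmonic-number summation with the $K\ge 3$ case distinction, and the identical comparison against the ideal bound $\OPT \le n(m-1)$, merely rephrased in terms of ``loss'' rather than satisfaction. One bookkeeping remark: since the per-iteration loss is at most $\frac{n}{K}\bigl((i-1)+(s_i-1)\bigr)$, the harmonic bound $\frac{n(m-K)H_K}{K}+n$ is the correct bound for $\sum_{i=1}^{K}\frac{n}{K}(s_i-1)$ rather than for $\sum_{i=1}^{K}\frac{n}{K}s_i$ as written (the latter would be $\frac{n(m-K)H_K}{K}+2n$), but with this reading your closing inequality $K\le(K-1)H_K$, valid for $K\ge 3$, finishes the argument exactly as in the paper.
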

\begin{proof}
  Our algorithm computes an optimal solution for $K \leq 2$. Thus we
  assume $K \geq 3$.  Let us consider the situation in the algorithm
  after the $i$'th iteration of the outer loop (we have $i=0$ if no
  iteration has been executed yet). So far, the algorithm has picked
  $i$ alternatives and assigned them to $i\frac{n}{K}$ agents (recall
  that for simplicity we assume that $K$ divides $n$ evenly). Hence,
  each agent has $\lceil \frac{m-i}{K-i} \rceil$ unassigned
  alternatives among his or her $i+ \lceil \frac{m-i}{K-i} \rceil$
  top-ranked alternatives.  By pigeonhole principle, this means that
  there is an unassigned alternative $a_{\ell}$ who is ranked among
  top $i+ \lceil \frac{m-i}{K-i} \rceil$ positions by at least
  $\frac{n}{K}$ agents. To see this, note that there are
  $(n-i\frac{n}{K})\lceil \frac{m-i}{K-i} \rceil$ slots for unassigned
  alternatives among the top $i+ \lceil \frac{m-i}{K-i} \rceil$
  positions in the preference orders of unassigned agents, and that
  there are $m-i$ unassigned alternatives. As a result, there must be
  an alternative $a_\ell$ for whom the number of agents that rank him
  or her among the top $i+ \lceil \frac{m-i}{K-i} \rceil$ positions is
  at least:
  \[
       \frac{1}{m-i}\left((n-i\frac{n}{K})\lceil \frac{m-i}{K-i} \rceil\right) \geq
       \frac{n}{m-i}\left(\frac{K-i}{K}\right)\left(\frac{m-i}{K-i}\right)  
       =\frac{n}{K}.
  \]
%
%
%
  In consequence, the $\lceil \frac{n}{K} \rceil$ agents assigned in
  the next step of the algorithm will have the total satisfaction at
  least $\lceil \frac{n}{K} \rceil \cdot (m - i - \lceil
  \frac{m-i}{K-i} \rceil)$. Thus, summing over the $K$ iterations, the
  total satisfaction guaranteed by the assignment $\Phi$ computed by
  Algorithm~\ref{alg:greedy} is at least the following value (see the
  comment below for the fourth inequality; for the last inequality we
  assume $K \geq 3$):
  \begin{align*}
    \ell_{1}^{\alpha_{b}}(\Phi) & \geq \sum_{i = 0}^{K-1} \frac{n}{K}  \cdot \left(m - i - \lceil \frac{m-i}{K-i} \rceil\right) \\
    & \geq \sum_{i = 0}^{K-1} \frac{n}{K}  \cdot \left( m - i - \frac{m-i}{K-i} -1 \right) \\
    & = \sum_{i = 1}^{K} \frac{n}{K} \cdot \left(m - i - \frac{m-1}{K-i+1} + \frac{i-2}{K-i+1} \right) \\
    & = \frac{n}{K}\left( \frac{K(2m-K-1)}{2} -(m-1) H_K + K(H_K-1) - H_K \right) \\
    & = (m-1)n \left( 1 - \frac{K-1}{2(m-1)} - \frac{H_K}{K} + \frac{H_K-1}{m-1} - \frac{H_K}{K(m-1)}  \right)\\
    & > (m-1)n \left( 1 - \frac{K-1}{2(m-1)} - \frac{H_K}{K} \right)
%
  \end{align*}
  The fourth equality holds because:
  \begin{align*}
     K(H_K-1)-H_K 
      &=  \sum_{i=1}^{K}\left(\frac{K}{i}-1\right) - H_K 
      =  \sum_{i=1}^{K}\left(\frac{K}{K-i+1}-1\right) - H_K \\
      &= \sum_{i=1}^{K}\frac{i-1}{K-i+1} - H_K 
      = \sum_{i=1}^{K}\frac{i-2}{K-i+1}.
  \end{align*}
  If each agent were assigned to his or her top alternative, the total
  satisfaction would be equal to $(m-1)n$. Thus we get the following
  bound:
\begin{align*}
  \frac{\ell_{1}^{\alpha_{\bordadec}}(\Phi)}{\OPT} \leq 1 - \frac{K-1}{2(m-1)} - \frac{H_K}{K}. 
\end{align*}
This completes the proof.
\end{proof}

Note that in the above proof we measure the quality of our assignment
against a perhaps-impossible, perfect solution, where each agent is
assigned to his or her top alternative.  This means that for
relatively large $m$ and $K$, and small $\frac{K}{m}$ ratio, the
algorithm can achieve a close-to-ideal solution irrespective of the
voters' preference orders. We believe that this is an argument in
favor of using Monroe's system in multiwinner elections.  On the flip
side, to obtain a better approximation ratio, we would have to use a
more involved bound on the quality of the optimal solution. To see
that this is the case, form an instance $I$ of
$\alpha_\bordadec$-\textsc{Monroe-Assignment-Dec} with $n$ agents and
$m$ alternatives, where all the agents have the same preference order,
and where the budget is $K$ (and where $K$ divides $n$).  It is easy
to see that each solution that assigns the $K$ universally top-ranked
alternatives to the agents is optimal.  Thus the total dissatisfaction
of the agents in the optimal solution is:
\begin{align*}
  \frac{n}{K}\left( (m-1) + \cdots + (m-K) \right) &= \frac{n}{K}
  \left(\frac{K(2m-K-1)}{2}\right) 
   = n(m-1) \left( 1 - \frac{K-1}{2(m-1)} \right).
\end{align*}
By taking large enough $m$ and $K$ (even for a fixed value of
$\frac{m}{K}$), the fraction $1 - \frac{K-1}{2(m-1)}$ can be
arbitrarily close to the approximation ratio of our algorithm (the
reasoning here is somewhat in spirit of the idea of identifying
maximally robust elections, as studied by Shiryaev, Yu, and
Elkind~\cite{shi-yu-elk:t:robust-winners}).

Betzler et al.~\cite{fullyProportionalRepr} showed that for each fixed
constant $K$, $\alpha_\bordadec$-\textsc{Monroe-Assignment-Dec} can be
solved in polynomial time. Thus, for small values of $K$ for which the
fraction $\frac{H_K}{K}$ affects the approximation guarantees of
Algorithm~\ref{alg:greedy} too much, we can use this polynomial-time
algorithm to find an optimal solution.  This means that we can
essentially disregard the $\frac{H_K}{K}$ part of
Algorithm~\ref{alg:greedy}'s approximation ratio.  In consequence, the
quality of the solution produced by Algorithm~\ref{alg:greedy} most
strongly depends on the ratio $\frac{K-1}{m-1}$. In most cases we can
expect it to be small (for example, in Polish parliamentary elections
$K = 460$ and $m \approx 6000$; in this case the greedy algorithm's
approximation ratio is about $0.96$).  For the remaining cases, for
example, when $K > \frac{m}{2}$, we can use a simple sampling-based
randomized algorithm described below.

The idea of this algorithm is to randomly pick $K$ alternatives and
match them optimally to the agents, using
Proposition~\ref{prop:assignment}. Naturally, such an algorithm might
be very unlucky and pick $K$ alternatives that all of the agents rank
low. Yet, if $K$ is large relative to $m$ then it is likely that such
a random sample would include a large chunk of some optimal solution.
In the lemma below, we asses the expected satisfaction obtained with a
single sampling step (relative to the satisfaction given by the
optimal solution) and the probability that a single sampling step
gives satisfaction close to the expected one.  Naturally, in practice
one should try several sampling steps and pick the one with the
highest satisfaction.


\begin{lemma}
  A single sampling step of the randomized algorithm for
  $\alpha_\bordadec$-\textsc{Monroe-Assignment-Dec} achieves expected
  approximation ratio of $\frac{1}{2}(1 + \frac{K}{m} - \frac{K^2}{m^2-m} +
  \frac{K^3}{m^3-m^2})$. Let $p_{\epsilon}$ denote the
  probability that the relative deviation between the obtained total
  satisfaction and the expected total satisfaction is higher than
  $\epsilon$; for $K \geq 8$ we have $p_{\epsilon} \leq \exp \left(-
    \frac{K\epsilon^2}{128} \right)$.
\end{lemma}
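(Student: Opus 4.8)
The plan is to analyze the algorithm that draws a uniformly random $K$-element subset $A'$ of the alternatives and then, invoking Proposition~\ref{prop:assignment}, matches $A'$ optimally to the agents under the Monroe capacity constraint (each selected alternative serving exactly $\frac{n}{K}$ agents). The two claims are handled separately: first a computation of the expected total satisfaction, then a concentration bound around it.

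For the expected ratio, I would first record the crude upper bound $\OPT \le n(m-1)$, achieved only if every agent could sit at the top of its own preference order; since $\OPT$ is deterministic for a fixed instance, it suffices to lower-bound $\E[\ell_1^{\alpha_\bordadec}(\Phi)]$ and divide by $n(m-1)$. Because the optimal matching of $A'$ dominates any feasible matching of $A'$, I would replace it by one explicit, capacity-respecting assignment whose expectation is computable in closed form --- for instance the rule that processes the sampled alternatives and lets each claim the $\frac{n}{K}$ still-unassigned agents that rank it best, or equivalently assigns each agent to its most preferred sampled alternative subject to the capacities. By linearity of expectation over the $n$ agents and exchangeability of the random sample, the expected per-agent satisfaction reduces to $m-\E[\text{matched position}]$, where the relevant position probabilities are ratios of falling factorials in $m$ and $K$ (e.g.\ the probability that an agent's best sampled alternative lies at position $j$ is $\binom{m-j}{K-1}/\binom{m}{K}$, with analogous expressions once the capacity cutoff is taken into account). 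Summing the resulting series and simplifying should yield exactly $\frac12\!\left(1+\frac{K}{m}-\frac{K^2}{m^2-m}+\frac{K^3}{m^3-m^2}\right)n(m-1)$.

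For the tail bound, I would view the total satisfaction $X=\ell_1^{\alpha_\bordadec}(\Phi)$ as a function of the sampled alternatives (revealed one at a time, giving a Doob martingale, or treated as $K$ coordinates for McDiarmid's inequality). Replacing a single sampled alternative changes the value of the optimal matching by at most $\frac{n}{K}(m-1)$: one can re-use the old matching and merely reassign the at most $\frac{n}{K}$ agents attached to the swapped alternative, each of whose satisfaction lies in $[0,m-1]$. A bounded-differences inequality then gives $\Pr[\,|X-\E X|\ge t\,]\le 2\exp\!\big(-2t^2/(K d^2)\big)$ with $d=\frac{n}{K}(m-1)$. Finally, since the first part shows $\E X\ge\frac12 n(m-1)$, a relative deviation $\epsilon$ corresponds to $t=\epsilon\,\E X\ge \frac{\epsilon}{2}n(m-1)$; substituting and discharging the remaining numerical constants (using $K\ge 8$ to absorb the lower-order slack and the leading factor $2$) should produce the stated bound $p_\epsilon\le\exp\!\big(-\frac{K\epsilon^2}{128}\big)$.

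The main obstacle I expect is the exact expectation in the second paragraph: the Monroe capacity constraint couples the agents, so one cannot simply assign everyone to their best-of-$K$ alternative (which would overshoot the true value badly when $K\ll m$, where most agents are crowded out). The real work is in choosing an explicit feasible assignment whose per-agent matched position is still analytically tractable --- most plausibly via a threshold/greedy rule together with a Hall-type feasibility check --- and then carrying out the binomial-coefficient bookkeeping that collapses to the cubic-in-$\frac{K}{m}$ formula.
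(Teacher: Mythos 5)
Your plan for the expectation bound has a fatal structural gap. You propose to prove $\E[\ell_1^{\alpha_\bordadec}(\Phi_B)] \geq \frac12\bigl(1 + \frac{K}{m} - \frac{K^2}{m^2-m} + \frac{K^3}{m^3-m^2}\bigr)\, n(m-1)$ and then divide by the crude bound $\OPT \leq n(m-1)$. That inequality is false, so no choice of explicit feasible assignment and no amount of binomial-coefficient bookkeeping can establish it. Concretely, let all $n$ agents have the identical preference order. The optimal Monroe matching of any sampled set $B$ then has value $\frac{n}{K}\sum_{a \in B}(m - \pos(a))$, and since each alternative lies in $B$ with probability $\frac{K}{m}$, we get $\E[\ell_1^{\alpha_\bordadec}(\Phi_B)] = \frac{n}{K}\cdot\frac{K}{m}\cdot\frac{m(m-1)}{2} = \frac{n(m-1)}{2}$ exactly, whereas the claimed ratio strictly exceeds $\frac12$ for every $1 \leq K < m$, because $\frac{K}{m} - \frac{K^2}{m^2-m} + \frac{K^3}{m^3-m^2} = \frac{K}{m}\bigl(1 - \frac{K}{m-1} + \frac{K^2}{m(m-1)}\bigr) > 0$. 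The higher-order terms in the formula are not a property of the sample measured against the ideal score $n(m-1)$; they arise from coupling the sample to an optimal solution, which is exactly what the paper does and your outline never does. The paper fixes $\Phi_\opt$ with selected set $A_\opt$, conditions on $K_\opt = \|B \cap A_\opt\|$, and lower-bounds the algorithm by the hybrid assignment that serves $B \cap A_\opt$ exactly as $\Phi_\opt$ does (contributing $\frac{i}{K}\OPT$ in expectation) and fills the remaining agents in randomly; since the resulting conditional bound is quadratic in $i$, it then uses $\E[K_\opt] = \frac{K^2}{m}$ and $\E[K_\opt^2] \geq \frac{K^4}{m^2}$ (Jensen). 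Nothing in your plan plays the role of the $\frac{i}{K}\OPT$ term, and the ``main obstacle'' you flag (the capacity coupling in a best-of-sample computation) is sidestepped, not solved, by this hybrid argument. (As an aside, the same uniform-profile instance gives $\E[\ell_1]/\OPT = \frac{m-1}{2m-K-1}$, e.g.\ $\frac35$ for $m=4$, $K=2$, which is below the claimed $\frac23$; the stated constant traces to an over-optimistic step in the paper itself --- the randomly filled agents' mean satisfaction is only guaranteed to be $\frac{m-K-1}{2}$, not $\frac{m-i-1}{2}$ --- so the exact formula cannot be recovered by any route.)

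Your concentration paragraph, by contrast, is essentially sound and genuinely different from the paper's: the paper applies Hoeffding directly to the per-alternative satisfactions $X_j \in [0, \frac{(m-1)n}{K}]$ and feeds in $\E[\sum_j X_j] \geq \frac{\OPT}{2}$ together with $\OPT \geq \frac{mn}{8}$ (obtained from Lemma~\ref{lemma:greedy} for $K \geq 8$), while your bounded-differences/martingale route with $d = \frac{n}{K}(m-1)$ is, if anything, more careful about the dependence structure of sampling without replacement. Two repairs are needed, though. First, the prerequisite $\E[\ell_1] \geq \frac12 n(m-1)$ must not be imported from your broken first part; derive it directly by noting that the optimal matching of $B$ dominates a uniformly random capacity-respecting matching of $B$, whose expectation is exactly $\frac{n(m-1)}{2}$. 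Second, the leading factor $2$ in the two-sided bounded-differences inequality cannot simply be absorbed into the constant $128$ when $K\epsilon^2$ is small, so either state the bound one-sided (which is all the downstream algorithm uses) or account for that factor explicitly.
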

\begin{proof}
  Let $N = [n]$ be the set of agents, $A = \{a_1, \ldots, a_m\}$ be
  the set of alternatives, and $V$ be the preference profile of the
  agents.  Let us fix some optimal solution $\Phi_\opt$ and let
  $A_\opt$ be the set of alternatives assigned to the agents in this
  solution.  For each $a_{i} \in A_{\opt}$, we write $\sat(a_{i})$ to
  denote the total satisfaction of the agents assigned to $a_{i}$ in
  $\Phi_\opt$. Naturally, we have $\sum_{a \in A_{\opt}} \sat(a) =
  \OPT$.  In a single sampling step, we choose uniformly at random a
  $K$-element subset $B$ of $A$.  Then, we form a solution $\Phi_B$ by
  matching the alternatives in $B$ optimally to the agents (via
  Proposition~\ref{prop:assignment}).  We write $K_{\opt}$ to denote
  the random variable equal to $\|A_\opt \cap B\|$, the number of
  sampled alternatives that belong to $A_{\opt}$. We define $p_{i} =
  \Pr(K_{\opt} = i)$. For each $j \in \{1, \ldots, K\}$, we write
  $X_j$ to denote the random variable equal to the total satisfaction
  of the agents assigned to the $j$'th alternative from the sample.
  We claim that for each $i$, $0 \leq i \leq K$, it holds that:
  \begin{align*}
    \E\left(\sum_{j=1}^{K}X_{j} | K_\opt = i\right) \geq \frac{i}{K}\OPT + \frac{m-i-1}{2}
    \cdot (n - i\frac{n}{K}).
  \end{align*}
  Why is this so? Given sample $B$ that contains $i$ members of
  $A_\opt$, our algorithm's solution is at least as good as a solution
  that matches the alternatives from $B \cap A_\opt$ in the same way
  as $\Phi_\opt$, and the alternatives from $B - A_\opt$ in a random
  manner.  Since $K_\opt = i$ and each $a_j \in A_\opt$ has equal
  probability of being in the sample, it is easy to see that the
  expected value of $\sum_{a_j \in B \cap A_\opt}\sat(a_j)$ is $
  \frac{i}{K}\OPT$.
  %
  %
  After we allocate the agents from $B \cap A_\opt$, each of the
  remaining, unassigned agents has $m-i$ positions in his or her
  preference order where he ranks the agents from $A - A_\opt$.  For
  each unassigned agents, the average score value associated with
  these positions is at least $\frac{m-i-1}{2}$ (this is so, because
  in the worst case the agent could rank the alternatives from $B \cap
  A_\opt$ in the top $i$ positions). There are $(n - i\frac{n}{K})$
  such not yet assigned agents and so the expected total satisfaction
  from assigning them randomly to the alternatives is $\frac{m-i-1}{2}
  \cdot (n - i\frac{n}{K})$. This proves our bound on the expected
  satisfaction of a solution yielded by optimally matching a random
  sample of $K$ alternatives.


  Since $\OPT$ is upper bounded by $(m-1)n$ (consider a
  possibly-nonexistent solution where every agent is assigned to his
  or her top preference), we get that:
  \[
    \E\left(\sum_{j=1}^{K}X_{j} | K_\opt = i\right) \geq  \frac{i}{K}\OPT + \frac{m-i-1}{2(m-1)} \cdot (1 - \frac{i}{K})\OPT.
  \]
  We can compute the unconditional expected satisfaction of $\Phi_B$
  as follows:
  \begin{align*}
    \E\left(\sum_{j=1}^{K}X_{j}\right) & = \sum_{i=0}^{K}p_{i}\E\left(\sum_{j=1}^{K}X_{j} | K_\opt = i\right) \\
    & \geq \sum_{i=0}^{K}p_{i}\left(\frac{i}{K}\OPT + \frac{m-i-1}{2(m-1)}
    \cdot (1 - \frac{i}{K})\OPT\right).
  \end{align*}
  Since $\sum_{i=1}^{K}p_{i} \cdot i$ is the expected number of the
  alternatives in $A_{\opt}$, we have that $ \sum_{i=1}^{K}p_{i} \cdot
  i = \frac{K^2}{m}.  $ (one can think of summing the expected values
  of $K$ indicator random variables; one for each element of $A_\opt$,
  taking the value $1$ if a given alternative is selected and taking
  the value $0$ otherwise).  Further, from the generalized mean
  inequality we obtain $\sum_{i=1}^{K}p_{i} \cdot i^{2} \geq
  \left(\frac{K^2}{m}\right)^{2}.$ In consequence, through routine
  calculation, we get that:
  \begin{align*}
     \E\left(\sum_{j=1}^{K}X_{j}\right) 
     & \geq \left(\frac{K}{m}\OPT + \frac{m^2 - K^2 -m}{2m(m-1)} \cdot \left(1 - \frac{K}{m}\right)\OPT\right) \\
     & = \frac{\OPT}{2}\left(1 + \frac{K}{m} - \frac{K^2}{m^2-m} +
     \frac{K^3}{m^3-m^2}\right).
  \end{align*}

  It remains to assess the probability that the total satisfaction
  obtained through $\Phi_B$ is close to its expected value.  Since
  $X_{j} \in \langle 0, \frac{(m-1)n}{K} \rangle$, from Hoeffding's
  inequality we get:
  \begin{align*}
    p_{\epsilon} & = \Pr\left(\left|\sum_{j=1}^{K}X_{j} - \E(\sum_{j=1}^{K}X_{j})\right| \geq \epsilon \E(\sum_{j=1}^{K}X_{j})\right) \\
    & \leq \exp \left(- \frac{2\epsilon^2 (\E(\sum_{j=1}^{K}X_{j}))^{2}}{K(\frac{(m-1)n}{K})^2}
    \right) = \exp \left(- \frac{K\epsilon^2 (\E(\sum_{j=1}^{K}X_{j}))^{2}}{((m-1)n)^2} \right)
  \end{align*}
  We note that since $\frac{K}{m}-\frac{K^2}{m^2-m} \geq 0$, our
  previous calculations show that $\E(\sum_{j=1}^{K}X_{j}) \geq
  \frac{\OPT}{2}$. Further, for $K \geq 8$, Lemma~\ref{lemma:greedy}
  (and the fact that in its proof we upper-bound $\OPT$ to be $(m-1)n$)
  gives that $\OPT \geq \frac{mn}{8}$. Thus $p_{\epsilon} \leq \exp
  \left(- \frac{K\epsilon^2}{128} \right)$.  This completes the proof.
\end{proof}

The threshold for $\frac{K}{m}$, where the randomized algorithm is (in
expectation) better than the greedy algorithm is about 0.57.  Thus, by
combining the two algorithms, we can guarantee an expected
approximation ratio of $0.715 - \epsilon$, for each fixed constant
$\epsilon$.  The pseudo-code of the combination of the two algorithms
is presented in Algorithm~\ref{alg:combination}.

\begin{theorem}
  For each fixed $\epsilon$, Algorithm~\ref{alg:combination} provides
  a $(0.715-\epsilon)$-approximate solution for the problem
  $\alpha_\bordadec$-\textsc{Monroe-Assignment-Dec} with probability
  $\lambda$ in time polynomial with respect to the input instance
  size and $-\log(1-\lambda)$.
\end{theorem}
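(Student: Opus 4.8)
The plan is to analyze Algorithm~\ref{alg:combination} by partitioning instances into three regimes according to $K$ and the ratio $x=\frac{K}{m}$, and to show that each regime yields approximation ratio at least $0.715-\epsilon$. First I would fix, as a function of the constant $\epsilon$, a constant threshold $K_0$ large enough that $\frac{H_K}{K}$ and all $O(\frac{1}{m})$ correction terms appearing below are suitably small whenever $K\ge K_0$; note that in both non-trivial regimes $K\ge K_0$ forces $m\ge K_0$, so these corrections are genuinely controllable. For $K<K_0$ the algorithm invokes the exact polynomial-time procedure of Betzler et al.~\cite{fullyProportionalRepr} (legitimate since $K$ is then bounded by a constant depending only on $\epsilon$), returning an optimal and hence $1$-approximate solution deterministically. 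This is exactly the role of ``disregarding the $\frac{H_K}{K}$ part'' discussed after Lemma~\ref{lemma:greedy}.

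The heart of the correctness argument is the crossover of the two non-trivial algorithms. Writing $g(x)=1-\frac{x}{2}$ for the leading term of the greedy guarantee of Lemma~\ref{lemma:greedy} and $h(x)=\frac{1}{2}(1+x-x^2+x^3)$ for the leading term of the expected sampling guarantee, I would observe that $g$ is strictly decreasing, that $h'(x)=\frac{1}{2}(1-2x+3x^2)>0$ (the discriminant of $3x^2-2x+1$ is negative) so $h$ is strictly increasing, and that the two curves meet at the unique real root $x^{\ast}\approx 0.57$ of $x^3-x^2+2x-1=0$, where $g(x^{\ast})=h(x^{\ast})=0.715$. Algorithm~\ref{alg:combination} runs the greedy algorithm when $x\le x^{\ast}$ and the sampling algorithm when $x>x^{\ast}$, so by monotonicity the clean guarantee of the chosen branch is at least $0.715$ on its side, with the worst case sitting exactly at the crossover.

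The greedy regime ($K\ge K_0$, $x\le x^{\ast}$) is deterministic: since $\frac{K-1}{2(m-1)}\le\frac{x^{\ast}}{2}(1+\frac{1}{m-1})$, Lemma~\ref{lemma:greedy} gives ratio $1-\frac{K-1}{2(m-1)}-\frac{H_K}{K}\ge 0.715-\frac{x^{\ast}}{2(m-1)}-\frac{H_K}{K}\ge 0.715-\epsilon$ once $K_0$ is chosen so that the two lower-order terms are each below $\frac{\epsilon}{2}$. In the sampling regime ($K\ge K_0\ge 8$, $x>x^{\ast}$) the expected total satisfaction is $\frac{\OPT}{2}(1+\frac{K}{m}-\frac{K^2}{m^2-m}+\frac{K^3}{m^3-m^2})\ge(0.715-\delta)\OPT$, where $\delta=\frac{\epsilon}{4}$ absorbs the $O(\frac{1}{m})$ gap from the clean value $h(x)$. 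Applying the concentration bound of the sampling lemma with deviation parameter $\delta$, a single sampling step returns satisfaction at least $(1-\delta)(0.715-\delta)\OPT\ge(0.715-\epsilon)\OPT$ except with probability $p_{\delta}\le\exp(-\frac{K\delta^2}{128})$, a constant strictly below $1$ depending only on $\epsilon$ (here $K\ge K_0\ge 8$). Running $r$ independent sampling steps and keeping the best fails only if every step fails, which happens with probability at most $p_{\delta}^{\,r}$; setting $r=\lceil\frac{-\log(1-\lambda)}{-\log p_{\delta}}\rceil=O(-\log(1-\lambda))$ drives the overall failure probability below $1-\lambda$.

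Finally I would assemble the running time: the Betzler branch runs in polynomial time for fixed $K<K_0$, the greedy branch is polynomial by Lemma~\ref{lemma:greedy}, and the sampling branch performs $r=O(-\log(1-\lambda))$ iterations, each a random $K$-subset draw followed by an optimal match computed in polynomial time via Proposition~\ref{prop:assignment}; hence the total is polynomial in the instance size and in $-\log(1-\lambda)$. I expect the main difficulty to be bookkeeping rather than any isolated hard step: one must simultaneously (i) absorb the $\frac{H_K}{K}$ and $O(\frac{1}{m})$ corrections into $\epsilon$ through a single choice of $K_0$, (ii) convert the \emph{relative}-deviation concentration bound into a \emph{multiplicative} guarantee against $\OPT$ without eroding the $0.715$ constant (the reason for using an auxiliary parameter $\delta$ strictly smaller than $\epsilon$), and (iii) verify that the per-step failure probability is bounded away from $1$ uniformly in the instance, so that amplification needs only $O(-\log(1-\lambda))$ repetitions.
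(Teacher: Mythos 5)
Your proposal is correct and follows essentially the same route as the paper's own proof: dispatch small $K$ to the exact algorithm of Betzler et al.\ (your observation that $K \le m$ lets this one case also absorb the paper's separate brute-force branch for constant $m$), then compare the greedy guarantee $1-\frac{x}{2}$ from Lemma~\ref{lemma:greedy} against the sampling guarantee $\frac{1}{2}(1+x-x^2+x^3)$ at the crossover $x^{\ast}\approx 0.57$ where both equal $0.715$, and amplify the sampling step $O(-\log(1-\lambda))$ times using the Hoeffding bound. The only inessential discrepancy is that Algorithm~\ref{alg:combination} does not branch on $x\lessgtr x^{\ast}$ but runs both the greedy and sampling procedures and returns the better assignment; since the maximum dominates whichever branch your analysis selects, your guarantees carry over verbatim.
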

\begin{proof}
  Let $\epsilon$ be a fixed constant. We are given an instance $I$ of
  $\alpha_\bordadec$-\textsc{Monroe-Assignment-Dec}. If $m \leq 1 +
  \frac{2}{\epsilon}$, we solve $I$ using a brute-force algorithm
  (note that in this case the number of alternatives is at most a
  fixed constant). Similarly, if $\frac{H_K}{K} \geq
  \frac{\epsilon}{2}$ then we use the exact algorithm of Betzler et
  al.~\cite{fullyProportionalRepr} for a fixed value of $K$ (note that
  in this case $K$ is no greater than a certain fixed constant). We do
  the same if $K \leq 8$.

  On the other hand, if neither of the above conditions hold, we try
  both Algorithm~\ref{alg:greedy} and a number of runs of the
  sampling-based algorithm. It is easy to check through routine
  calculation that if $\frac{H_K}{K} \leq \frac{\epsilon}{2}$ and $m >
  1 + \frac{2}{\epsilon}$ then Algorithm~\ref{alg:greedy} achieves
  approximation ratio no worse than $(1 -\frac{K}{2m} - \epsilon)$.
  %
  %
  %
  %
  We run the sampling-based algorithm $\frac{-512 \log (1 -
    \lambda)}{K\epsilon^2}$ times. The probability that a single run
  fails to find a solution with approximation ratio at least
  $\frac{1}{2}(1 + \frac{K}{m} - \frac{K^2}{m^2-m} +
  \frac{K^3}{m^3-m^2}) - \frac{\epsilon}{2}$ is
  $p_{\frac{\epsilon}{2}} \leq \exp \left(- \frac{K\epsilon^2}{4 \cdot
      128} \right)$. Thus, the probability that at least one run will
  find a solution with at least this approximation ratio is at least:
  \begin{align*}
    1 - p_{\frac{\epsilon}{2}}^{\frac{-512 \log (1 - \lambda)}{K\epsilon^2}} = 1
    - \exp \left(-\frac{K\epsilon^2}{4\cdot128} \cdot \frac{-512 \log
        (1 - \lambda)}{K\epsilon^2} \right) = \lambda.
  \end{align*}
  Since $m \leq 1 + \frac{2}{\epsilon}$, by routine calculation we see
  that the sampling-based algorithm with probability $\lambda$ finds a
  solution with approximation ratio at least $\frac{1}{2}(1 +
  \frac{K}{m} - \frac{K^2}{m^2} + \frac{K^3}{m^3}) - \epsilon$.  By
  solving the equality:
  \begin{align*}
    \frac{1}{2}\left(1 + \frac{K}{m} - \frac{K^2}{m^2} + \frac{K^3}{m^3}\right) =
    1 -\frac{K}{2m}
  \end{align*}
  we can find the value of $\frac{K}{m}$ for which the two algorithms
  give the same approximation ratio.  By substituting $x =
  \frac{K}{m}$ we get equality $1 + x - x^2 + x^3 = 2 - x$.  One can
  calculate that this equality has a single solution within $\langle
  0,1 \rangle$ and that this solution is $x \approx 0.57$. For this
  $x$ both algorithms guarantee approximation ratio of $0.715 -
  \epsilon$. For $x < 0.57$ the deterministic algorithm guarantees a
  better approximation ratio and for $x > 0.57$, the randomized
  algorithm does better.
\end{proof}

\SetKwInput{KwNotation}{Notation}
\begin{algorithm}[t]
   \small
   \SetAlCapFnt{\small}
   \SetKwInOut{Parameters}{Parameters}
   \KwNotation{We use the same notation as in Algorithm~\ref{alg:greedy}; $\w(\cdot)$ denotes the Lambert W-Function.\\}
   \Parameters{$\lambda$ $\leftarrow$ required probability of achieving the approximation ratio equal $0.715 - \epsilon$}
   \If{$\frac{H_K}{K} \geq
  \frac{\epsilon}{2}$ 
  or $K \leq 8$}{compute the optimal solution using an algorithm of Betzler et al.~\cite{fullyProportionalRepr} and return.}
   \If{$m \leq 1 + \frac{2}{\epsilon}$}{compute the optimal solution using a simple brute force algorithm and return.} 
   $\Phi_1 \leftarrow$ solution returned by Algorithm~\ref{alg:greedy} \\
   $\Phi_2 \leftarrow$ run the sampling-based algorithm $\frac{-512 \log (1 - \lambda)}{K\epsilon^2}$ times; select the assignment of the best quality \\
   return the better assignment among $\Phi_1$ and $\Phi_2$
   \caption{\small The combination of two algorithms for \textsc{Monroe-Assignment}.}
   \label{alg:combination}
\end{algorithm}

\subsection{Chamberlin and Courant's System}\label{sec:CCAlgorithm}

Let us now move on to the Chamberlin and Courant's system.  Since
Chamberlin and Courant's system places fewer restrictions on the
solution assignment than the Monroe's system, our algorithms for
$\alpha_\bordadec$-\textsc{Monroe-Assignement-Dec} can also be used
for $\alpha_\bordadec$-\textsc{CC-Assignement-Dec}, improving upon the
approximation algorithms of Lu and
Boutilier~\cite{budgetSocialChoice}. However, it turns out that the
additional freedom of Chamberlin and Courant's system allows us to go
even further, and to design a polynomial-time approximation scheme for
$\alpha_\bordadec$-\textsc{CC-Assignement-Dec}.



The idea of our algorithm (presented as Algorithm~\ref{alg:greedy2}
below) is to compute a certain value $x$ and to greedily compute an
assignment that (approximately) maximizes the number of agents
assigned to one of their top-$x$ alternatives. If after this process
some agent has no alternative assigned, we assign him or her to his or
her most preferred alternative from those already picked.  Somewhat
surprisingly, it turns out that this greedy strategy achieves
high-quality results. (Recall that for nonnegative
real numbers, Lambert's W function, $\w(x)$, is defined to be the
solution of the equality $x = \w(x)e^{\w(x)}$.)


\begin{algorithm}[t]
   \small
   \SetAlCapFnt{\small}
   \KwNotation{We use the same notation as in Algorithm~\ref{alg:combination};
   $\hspace{21pt$} $\mathrm{num\_pos}_x(a) \leftarrow \|\{i \in [n] \setminus \Phi^{\leftarrow} : pos_i(a) \leq x \}\|$ (the number of not-yet assigned agents that rank alternative $a$ in one of their first $x$ positions)}
   $\Phi = \{\}$ \\
   $x = \lceil \frac{m\w(K)}{K} \rceil$ \\
   \For{$i\leftarrow 1$ \KwTo $K$}{
                        $a_{i} \leftarrow \mathrm{argmax}_{a \in A \setminus \Phi^{\rightarrow}} \mathrm{num\_pos}_x(a)$

      \ForEach{$j \in [n] \setminus \Phi^{\leftarrow}$}{
         \If{$pos_j(a_{i}) < x$}{
             $\Phi[j] \leftarrow a_{i}$ \\
         }
      }
   }
   \ForEach{$j \in A \setminus \Phi^{\leftarrow}$}{
       $a \leftarrow$ such server from $\Phi^{\rightarrow}$ that $\forall_{a' \in \Phi^{\rightarrow}} pos_{j}(a) \leq pos_{j}(a')$ \\
       $\Phi[j] \leftarrow a$ \\
   }
   \caption{\small The algorithm for \textsc{CC-Assignment}.}
   \label{alg:greedy2}
\end{algorithm}

\begin{lemma}
  Algorithm~\ref{alg:greedy2} is a polynomial-time $(1 -
  \frac{2\w(K)}{K})$-approximation algorithm for
  $\alpha_\bordadec$-\textsc{CC-Assignement-Dec}.
\end{lemma}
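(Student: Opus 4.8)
The plan is to read Algorithm~\ref{alg:greedy2} as a greedy maximum-coverage procedure in which an alternative \emph{covers} an agent $j$ when it sits among $j$'s top $x$ positions, where $x = \lceil \frac{m\w(K)}{K} \rceil$. The $K$-fold main loop assigns each agent it covers to such a top-$x$ alternative, so every covered agent earns Borda satisfaction at least $m - x + 1$ (the assignment condition is $\pos_j(a_i) < x$); the remaining ``uncovered'' agents are handed their favourite already-chosen alternative in the clean-up loop, and I would simply lower-bound their satisfaction by $0$. Writing $U_K$ for the number of agents still uncovered after the loop, this gives $\ell_1^{\alpha_\bordadec}(\Phi) \geq (m - x + 1)\,(n - U_K)$, while $\OPT \leq (m-1)n$ since no agent can score more than $m-1$. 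The whole lemma then reduces to controlling $U_K$.

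First I would bound $U_K$ by a standard coverage recursion. The key invariant is that an agent left uncovered after $i$ steps must have \emph{all} of its top-$x$ alternatives among the yet-unchosen ones, for otherwise it would have been assigned when one of them was first picked. Hence the $U_i$ uncovered agents contribute $x\,U_i$ ``membership slots'' spread over at most $m$ unchosen alternatives, so by the pigeonhole principle some unchosen alternative is ranked in the top $x$ by at least $\frac{x}{m}U_i$ of them; the greedy choice (which maximizes exactly this count) does at least as well, so $U_{i+1} \leq U_i\bigl(1 - \frac{x}{m}\bigr)$. Iterating and using $1 - t \leq e^{-t}$ yields $U_K \leq n\bigl(1 - \frac{x}{m}\bigr)^K \leq n\,e^{-Kx/m}$.

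Next I would substitute the chosen threshold. Because $x \geq \frac{m\w(K)}{K}$ we have $\frac{Kx}{m} \geq \w(K)$, and the defining identity $\w(K)e^{\w(K)} = K$ gives $e^{-\w(K)} = \frac{\w(K)}{K}$; hence $U_K \leq n\,\frac{\w(K)}{K}$ and the number of covered agents is at least $n\bigl(1 - \frac{\w(K)}{K}\bigr)$. Using $x = \lceil \frac{m\w(K)}{K}\rceil \leq \frac{m\w(K)}{K} + 1$, the per-covered-agent bound becomes $m - x + 1 \geq m\bigl(1 - \frac{\w(K)}{K}\bigr)$ — this is where the strict inequality $\pos_j(a_i) < x$ matters, as its ``$+1$'' is precisely what absorbs the ceiling. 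Combining the two factors, $\ell_1^{\alpha_\bordadec}(\Phi) \geq m\,n\bigl(1 - \frac{\w(K)}{K}\bigr)^2$, and dividing by $\OPT \leq (m-1)n < mn$ gives $\frac{\ell_1^{\alpha_\bordadec}(\Phi)}{\OPT} \geq \bigl(1 - \frac{\w(K)}{K}\bigr)^2 \geq 1 - \frac{2\w(K)}{K}$, which is the claimed ratio; polynomial running time is immediate from the structure of the two loops.

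I expect the main obstacle to be the bookkeeping around rounding and the degenerate regime rather than the coverage argument itself. The clean cancellation above implicitly needs $x \geq 2$ (equivalently $\frac{m\w(K)}{K} \geq 1$), since for $x = 1$ the main loop covers nobody; this holds on all interesting instances (for example whenever $m \geq K$ and $K \geq 3$, so that $\w(K) \geq 1$), and the few remaining small cases can be dispatched separately, much as brute-force branches handle small $m$ and $K$ elsewhere in the paper. I would also take care to reconcile the ``$\leq x$'' convention used by $\mathrm{num\_pos}_x$ in the greedy selection with the strict ``$< x$'' used in the actual assignment, and verify that the pigeonhole invariant (every still-uncovered agent keeps all of its relevant top alternatives unchosen) genuinely holds at each iteration, since that invariant is exactly what powers the geometric decay of $U_i$ and hence the whole bound.
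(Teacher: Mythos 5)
Your proof follows essentially the same route as the paper's own: the same greedy-coverage reading of the algorithm with threshold $x$, the same pigeonhole recursion $n_{i+1} \leq n_i\left(1 - \frac{x}{m}\right)$ for the uncovered agents, the same use of the Lambert-W identity $e^{-\w(K)} = \frac{\w(K)}{K}$ to get $n_K \leq \frac{n\w(K)}{K}$, and the same final combination of the coverage count, the per-covered-agent satisfaction bound, and $\OPT \leq (m-1)n$. If anything, you are more careful than the paper, which works with $x = \frac{m\w(K)}{K}$ without the ceiling and ignores the strict-versus-nonstrict position comparison that you correctly flag as needing reconciliation.
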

\begin{proof}
  Let $x = \frac{m\w(K)}{K}$.  We will first give an inductive proof
  that, for each $i$, $0 \leq i \leq K$, after the $i$'th iteration of the outer loop at
  most $n(1- \frac{w(K)}{K})^{i}$ agents are unassigned. Based on this
  observation, we will derive the approximation ratio of our
  algorithm.

  For $i = 0$, the inductive hypothesis holds because $n(1-
  \frac{\w(K)}{K})^{0} = n$. For each $i$, let $n_{i}$ denote the
  number of unassigned agents after the $i$'th iteration. Thus, after
  the $i$'th iteration there are $n_i$ unassigned agents, each with
  $x$ unassigned alternatives among his or her top-$x$ ranked
  alternatives. As a result, at least one unassigned alternative is
  present in at least $\frac{n_{i}x}{m-i}$ of top-$x$ positions of
  unassigned agents. This means that after the $(i+1)$'st iteration
  the number of unassigned agents is:
  \begin{align*}
    n_{i+1} \leq n_{i} - \frac{n_{i}x}{m-i} \leq n_{i}\left(1 -
    \frac{x}{m}\right) = n_{i}\left(1 - \frac{\w(K)}{K}\right).
  \end{align*}
  If for a given $i$ the inductive hypothesis holds, that is, if $n
  _{i} \leq n\left(1- \frac{\w(K)}{K}\right)^{i}$, then:
  \begin{align*}
    n_{i+1} \leq n(1- \frac{\w(K)}{K})^{i}(1 -
    \frac{\w(K)}{K}) = n(1- \frac{\w(K)}{K})^{i+1}
  \end{align*}
  Thus the hypothesis holds and, as a result, we have that:
  \begin{align*}
    n_{k} \leq n\left(1- \frac{\w(K)}{K}\right)^{K} \leq
    n\left(\frac{1}{e}\right)^{\w(K)} = \frac{n\w(K)}{K}.
  \end{align*}
  Let $\Phi$ be the assignment computed by our algorithm.  To compare
  it against the optimal solution, it suffices to observe that the
  optimal solution has satisfaction at most $\OPT \leq (m-1)n$,
  that each agent selected during the first $K$ steps has satisfaction
  at least $m-x = m-\frac{m\w(K)}{K}$, and that the agents not
  assigned within the first $K$ steps have satisfaction no worse than
  $0$. Thus it holds that:
  \begin{align*}
    \frac{\ell_{1}^{\alpha_{\bordadec}}(\Phi)}{\OPT} & \geq \frac{(n - \frac{n\w(K)}{K})(m - \frac{m\w(K)}{K})}{(m-1)n} \\
    & \geq (1 - \frac{\w(K)}{K})(1 - \frac{\w(K)}{K}) \geq 1 -
    \frac{2\w(K)}{K}
  \end{align*}
  This completes the proof.
\end{proof}

Since for each $\epsilon > 0$ there is a value $K_\epsilon$ such that
for each $K > K_\epsilon$ it holds that $\frac{2\w(K)}{K} < \epsilon$,
and $\alpha_\bordadec$-\textsc{CC-Assignment} problem can be solved
optimally in polynomial time for each fixed constant $K$(see the work
of Betzler et al.~\cite{fullyProportionalRepr}), there is a
polynomial-time approximation scheme (PTAS) for
$\alpha_\bordadec$-\textsc{CC-Assignment} (i.e., a family of
algorithms such that for each fixed $\beta$, $0 < \beta < 1$, there is
a polynomial-time $\beta$-approximation algorithm for
$\alpha_\bordadec$-\textsc{CC-Assignment} in the family; note that in
PTASes we measure the running time by considering $\beta$ to be a
fixed constant).

\begin{theorem}\label{theorem:ptas}
  There is a PTAS for $\alpha_\bordadec$-\textsc{CC-Assignment}.
\end{theorem}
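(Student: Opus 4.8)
The plan is to combine the two facts already established: the greedy $(1 - \frac{2\w(K)}{K})$-approximation of the preceding lemma, which is good when $K$ is large, and the exact polynomial-time algorithm of Betzler et al.~\cite{fullyProportionalRepr} for fixed constant $K$, which handles small $K$. Recall that a PTAS is a family of algorithms containing, for each fixed $\beta$ with $0 < \beta < 1$, a polynomial-time $\beta$-approximation algorithm; writing $\epsilon = 1 - \beta$, it is equivalent to exhibit, for each fixed $\epsilon$, a polynomial-time $(1-\epsilon)$-approximation algorithm for $\alpha_\bordadec$-\textsc{CC-Assignment}.

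First I would observe that $\frac{2\w(K)}{K} \to 0$ as $K \to \infty$. This is a standard asymptotic property of the Lambert $W$ function: since $\w(K)e^{\w(K)} = K$ and $\w(K) \geq 1$ for $K \geq e$, we have $K \geq e^{\w(K)}$, hence $\w(K) \leq \log K$, so $\frac{2\w(K)}{K} \leq \frac{2\log K}{K} \to 0$. Consequently, for each fixed $\epsilon > 0$ there is a threshold $K_\epsilon$, depending only on $\epsilon$ and not on the input, such that $\frac{2\w(K)}{K} < \epsilon$ whenever $K > K_\epsilon$; because $\frac{2\w(K)}{K}$ is eventually monotone decreasing, this threshold is computable.

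The algorithm for a given $\epsilon$ then splits into two exhaustive cases according to the value of $K$ in the input instance. If $K > K_\epsilon$, I would run Algorithm~\ref{alg:greedy2}; by the preceding lemma it runs in polynomial time and returns an assignment of satisfaction at least $(1 - \frac{2\w(K)}{K})\cdot\OPT > (1-\epsilon)\OPT$. If $K \leq K_\epsilon$, then $K$ is bounded by a fixed constant depending only on $\epsilon$, so I would invoke the exact algorithm of Betzler et al.~\cite{fullyProportionalRepr}, which runs in polynomial time for fixed $K$ and returns an optimal---hence certainly $(1-\epsilon)$-approximate---assignment. In both cases the running time is polynomial in the instance size (with the degree allowed to depend on $\epsilon$, exactly as a PTAS permits) and the guarantee of at least $(1-\epsilon)\OPT$ holds.

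The step requiring the most care is confirming that $K_\epsilon$ is a genuine constant, i.e.\ that $\frac{2\w(K)}{K}$ truly vanishes, which rests only on the elementary bound $\w(K) \leq \log K$ above, and checking that the two cases are exhaustive and that the algorithm can decide in polynomial time which case applies. Neither point is difficult, so I expect the argument to be short; the substantive content lies entirely in the preceding lemma and in the cited fixed-$K$ result of Betzler et al.
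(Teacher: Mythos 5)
Your proposal is correct and follows essentially the same route as the paper: the paper's argument is precisely the combination of Algorithm~\ref{alg:greedy2}'s $(1 - \frac{2\w(K)}{K})$-approximation guarantee for $K$ above a threshold $K_\epsilon$ with the exact fixed-$K$ algorithm of Betzler et al.~\cite{fullyProportionalRepr} below it. Your added justification that $\w(K) \leq \log K$ (hence $\frac{2\w(K)}{K} \to 0$) simply makes explicit a fact the paper leaves implicit.
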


The idea used in Algorithm~\ref{alg:greedy2} can also be used to
address a generalized \textsc{Minmax-CC-Assignment-Dec} problem. We
can consider the following relaxation of
\textsc{Minmax-CC-Assignment-Dec}: Instead of requiring that each
agent's satisfaction is lower-bounded by some value, we ask that the
satisfactions of a significant majority of the agents are
lower-bounded by a given value.  More formally, for a given constant
$\delta$, we introduce an additional quality metric:
\[ \min_{\delta}^{\alpha}(\Phi) = \mathrm{max}_{N' \subseteq N:
  \frac{||N|| - ||N'||}{||N||} \leq \delta}\mathrm{min}_{i \in
  N'}\alpha(pos_{i}(\Phi(i))).
\]
For a given $0 < \delta < 1$, by putting $x =
\frac{-m\ln(\delta)}{K}$, we get $(1 +
\frac{\ln(\delta)}{K})$-approximation algorithm for the
$\min_{\delta}^{\alpha}(\Phi)$ metric.

It would also be natural to try a sampling-based approach to solving
$\alpha_\bordadec$-\textsc{CC-Assignement-Dec}, just as we did for the
Monroe variant. Indeed, as recently (and independently) observed by
Oren~\cite{ore:p:cc}, this leads to a randomized algorithm with
expected approximation ratio of $(1 - \frac{1}{K+1})(1+\frac{1}{m})$.




\section{Conclusions}
\label{sec:conclusions}

\begin{table}[t!]
\centering
\footnotesize
\begin{tabular}{p{0.2cm}|p{1.6cm}|p{3.9cm}|p{1.6cm}|p{1.6cm}|p{1.6cm}|p{1.6cm}|}
\cline{2-7}
& \multicolumn{2}{|c|}{Monroe} & \multicolumn{2}{|c|}{Chamberlin and Courant}  &\multicolumn{2}{|c|}{General model}\\ \cline{2-7}
& dissat. & satisfaction & dissat. & satisfaction & dissat. & satisfaction \\ \cline{1-7}
\multicolumn{1}{|c|}{\begin{rotate}{270}\;\;\;\;\;\;utilitarian\end{rotate}}
& Inapprox. $\newline$ Theorem~\ref{theorem:noApprox1}
& Results for Borda only. $\newline$
randomized algorithm: $\newline$ 
(a) $0.715 - \epsilon$; 
$\newline$
(b) $\frac{1 + \frac{K}{m} - \frac{K^2}{m^2-m} + \frac{K^3}{m^3-m^2}}{2} -\epsilon$ 
deterministic algorithm: $\newline$
$1 - \frac{K-1}{2(m-1)}-\epsilon$
& Inapprox. $\newline$ Theorem~\ref{theorem:noApprox3}
& for Borda: $\newline$ PTAS $\newline$ (Sec.~\ref{sec:CCAlgorithm}); $\newline$ for general PSF: $\newline$ $(\frac{e-1}{e})$~\cite{budgetSocialChoice}
& Inapprox. $\newline$ Theorem~\ref{theorem:noApprox1} Theorem~\ref{theorem:noApprox3}
& Open problem \\ \cline{1-7}
\multicolumn{1}{|c|}{\begin{rotate}{270}egal.\end{rotate}}
& Inapprox. $\newline$ Theorem~\ref{theorem:noApprox2}
& Inapprox. $\newline$ Theorem~\ref{theorem:noApprox5}
& Inapprox. $\newline$ Theorem~\ref{theorem:noApprox4}
& Inapprox. $\newline$ Theorem~\ref{theorem:noApprox6}
& Inapprox. $\newline$ Theorem~\ref{theorem:noApprox2} Theorem~\ref{theorem:noApprox4}
& Inapprox. $\newline$ Theorem~\ref{theorem:noApprox5} Theorem~\ref{theorem:noApprox6}
\\ \cline{1-7}
\end{tabular}
\caption{Summary of approximation results fo Monroe's and 
  Chamberlin-Courant's multiwinner voting systems and for 
  our general resource allocation problem.}
\label{table:summary}
\end{table}

We have defined a certain resource allocation problem and have shown
that it generalizes multiwinner voting rules of Monroe and of
Chamberlin and Courant. Since it is known that these voting rules are
hard to
compute~\cite{complexityProportionalRepr,budgetSocialChoice,fullyProportionalRepr},
we focused on approximation properties of our problems. We have shown
that if we focus on agents' dissatisfaction then our problems are hard
to approximate up to any constant factor. The same holds for the case
where we focus on least satisfied agent's satisfaction. However, for
the case of optimizing total satisfaction, we have shown good
approximate solutions for the special cases of the resource allocation
problem pertaining to Monroe's and Chamberlin and Courant's voting
systems. In particular, we have shown a randomized algorithm that for
Monroe's system achieves approximation ratio arbitrarily close to
$0.715$, and for Chamberlin and Courant's system we have shown a
polynomial-time approximation scheme (PTAS). Table~\ref{table:summary}
summarizes the presented results.  The most pressing, natural open
question is whether there is a PTAS for Monroe's system. We plan to
address the problem of finding approximate solution in the general
framework for the case of optimizing total satisfaction and to
empirically verify the quality of our algorithms.

\medskip
\noindent\textbf{Acknowledgements} The
authors were supported in part by AGH University of Science and
Technology grant 11.11.120.865, by the Foundation for Polish Science's
Homing/Powroty program, and by EU's Human Capital Program "National
PhD Programme in Mathematical Sciences" carried out at the University
of Warsaw.

\bibliographystyle{plain}
\bibliography{main}

\end{document}